\newtheorem{proposition}{Proposition}
\begin{document}

\title{ \color{black}Precoding for Multi-Cell ISAC: from Coordinated Beamforming to Coordinated Multipoint and Bi-Static Sensing\\}

\author{Nithin Babu, ~\IEEEmembership{~Member, ~IEEE}, Christos Masouros, ~\IEEEmembership{~Fellow, ~IEEE}, \\Constantinos B. Papadias,~\IEEEmembership{~Fellow,~IEEE}, \color{black}and Yonina C. Eldar, ~\IEEEmembership{~Fellow,~IEEE}. 
\thanks{This version of the paper has been submitted to IEEE Transactions on Wireless Communications for possible publication. This work was supported in part by the Engineering and Physical Sciences Research Council under Project EP/S028455/1.}
\thanks{N. Babu and C. Masouros are with the Department of EE, UCL (e-mail: n.babu@ucl.ac.uk, c.masouros@ucl.ac.uk).}
\thanks{C. B. Papadias is with Research, Technology and Innovation Network (RTIN), Alba, 
The American College of Greece, Greece (e-mail: cpapadias@acg.edu).}
\thanks{Y. C. Eldar is with the Faculty of Mathematics and ComputerScience, Weizmann Institute of Science, Rehovot, Israel (e-mail:   yonina.eldar@weizmann.ac.il.)}
}

\markboth{ }
{Shell \MakeLowercase{\mathrmit{et al.}}: Bare Demo of IEEEtran.cls for IEEE Journals}
\maketitle

\begin{abstract}
This paper proposes a framework for designing robust precoders for a multi-input single-output (MISO) system that performs integrated sensing and communication (ISAC) across multiple cells and users. We use  Cramer-Rao-Bound (CRB) to measure the sensing performance and derive its expressions for two multi-cell scenarios, namely coordinated beamforming (CBF) and coordinated multi-point (CoMP). \color{black}In the CBF scheme, a BS shares channel state information (CSI) and estimates target parameters using monostatic sensing. In contrast, a BS in the CoMP scheme shares the CSI and data, allowing bistatic sensing through inter-cell reflection\color{black}. We consider both \color{black} block-level (BL) and symbol-level (SL) \color{black}precoding schemes for both the multi-cell scenarios that are \color{black}robust to channel state estimation errors\color{black}. The formulated optimization problems to minimize the CRB in estimating the parameters of a target and maximize the minimum communication signal-to-interference-plus-noise-ratio (SINR) while satisfying a given total transmit power budget are non-convex. We tackle the non-convexity using a combination of semidefinite relaxation (SDR) and alternating optimization (AO) techniques. Simulations suggest that neglecting the inter-cell reflection and communication links degrades the performance of an ISAC system. The CoMP scenario employing \color{black}SL precoding \color{black} performs the best, whereas the \color{black}BL precoding \color{black} applied in the CBF scenario produces relatively high estimation error for a given minimum SINR value.  
\end{abstract}
\begin{IEEEkeywords}
  ISAC, Precoder, Cramer-Rao bound, CoMP, CBF.
\end{IEEEkeywords}
\IEEEpeerreviewmaketitle
\section{Introduction}
\color{black}Integrated sensing and communication (ISAC) has been identified as an enabler for next-generation wireless networks to augment communication services with \color{black}sensing for emerging \color{black} applications spanning \color{black}connected vehicles, remote healthcare, smart homes, and more \cite{liu2020joint},\cite{liu2022integrated}. The availability of large bandwidth, multiple antennas, and dense deployment of 5G-Advanced and 6G networks enable high-resolution radio sensing capability \cite{chen20235g}. ISAC has been recognized as one of the key enablers for 6G by the standardization bodies such as the International Telecommunication Union (ITU) and Third Generation Partnership Project (3GPP) \cite{union2022future} \cite{kaushik2023towards}. Having sensing and communication capabilities mutually benefits both systems: sensing can be used by the communication system to understand the environment better and enhance, for instance, interference management and beamforming, whereas the connected network infrastructure enables coordinated sensing at an unprecedented scale. 

\color{black}Numerous works have considered designing optimal transmit waveforms/precoders to improve a single-cell ISAC system's sensing and communication performance\cite{9858656,liu2018toward,5776640,9723383,temiz2023experimental,liu2021cramer,9124713,9303435,7953658,9385108,xu2022proof,ozkaptan2023software,liao2023faster,9585492,10065807}. 
However, the dense deployment of next-generation small-cell BSs causes the signals transmitted from a BS to its users to affect the sensing and communication performance of neighboring BSs. When the BSs use the same time and frequency resources to serve users, a user experiences intra-cell interference by the signals intended for other users in the same cell and inter-cell interference (ICI) due to co-channel signal leakage from the neighboring BSs, reducing the received SINR. The work in \cite{bjornson2013optimal} discusses various resource allocation schemes and precoder designs to improve the received SINR for a communication-only multi-cell system by suppressing interference. However, for ISAC, in addition to this, inter-cell reflections (ICR) will be received by a BS from its target due to the signal transmitted from the neighboring BSs. The received power through ICR can degrade target parameter estimation if the BS is unaware of the data transmitted from the neighboring BS. \color{black}This emphasizes the need for coordination among BSs to effectively manage inter-cell communication and sensing links. 

The ISAC BSs can coordinate at different levels through coordinated beamforming (CBF) and coordinated multipoint (CoMP) schemes, similar to the communication-only multi-cell system. \color{black}In an ISAC CBF scheme, a single BS serves a disjoint set of users and estimates its target's parameters through mono-static sensing\color{black}. Here, the signal power received through the inter-cell links negatively affects the sensing and communication performances. Hence, each BS selects a transmit strategy jointly with other BSs to minimize the ICI and ICR using the globally shared CSI obtained from the users via feedback channels and ICR direction information shared among neighboring BSs. Conversely, a user is served by all the BSs in an ISAC CoMP scenario, thereby improving the received communication SINR at the user. Since the CoMP scheme shares the user data to be transmitted and the CSI globally among the BSs through a high bandwidth backhaul network, the additional power received through ICR and ICI links enhances estimation accuracy and communication SINR at the expense of increased coordination overhead. \color{black}In particular, ICR enables bistatic sensing\color{black}. 

Each of these multi-cell scenarios can employ either of two existing precoding strategies that differ in how they deal with the co-channel interference experienced by a user: a) block-level precoding (BLP)\cite{bjornson2013optimal} and b) symbol-level precoding (SLP)\cite{masouros2015exploiting}. 
Conventional precoding, also called BLP, designs precoders for a set of users to transmit a given block of symbols. It treats the interference experienced by a user as a harmful element, whereas constructive interference (CI)-based precoding, also called SLP, uses the signal power received through the instantaneous interference to aid the received communication SINR. The SLP technique exploits knowledge of both CSI and downlink data to be transmitted at the BS to ensure the received symbol at a user falls within the constructive region of the signal constellation \cite{li2020tutorial}. \color{black}Considering the SLP in the CoMP scheme, each BS can utilize both the intra-cell and inter-cell interference to enhance the received signal power\color{black}. Conversely, the SLP in the CBF scheme can only utilize the intra-cell interference due to the non-availability of the user data from the neighboring BSs. \color{black}The design of the optimal precoders for both coordination schemes depends on the available CSI at the BSs and the users. In practice, CSI is prone to errors necessitating that the designed precoders exhibit robustness to potential CSI uncertainties.\color{black}

\color{black} Multi-cell ISAC setups with various combinations of CBF and CoMP schemes have been considered in \cite{wei2023symbol,wang2023resource,li2022beamforming,xu2023fundamental,xu2022integrated,xu2023integrated,xu2023joint,jiang2023collaborative,liu2023distributed,zhang2023joint}. The work in \cite{wei2023symbol} and \cite{wang2023resource} consider orthogonal transmission among BSs, whereas \cite{li2022beamforming} uses an additional BS as the receiver to enable bi-static sensing and hence does not consider the inter-cell interference links. 
\cite{xu2023fundamental} considers a multi-static scenario where one  BS  transmits signals  to  one  vehicle and the echoes are captured by several other BSs for sensing purposes. \cite{xu2022integrated} and \cite{xu2023integrated} consider ComP to serve a set of users, while the sensing model considers reflection from unwanted targets as interference. The study in \cite{xu2023joint} considers CoMP mode to serve a set of users while detecting a target. Here, the transmitter employs distinct sensing and communication signals, allowing for modeling the interference between them.  The authors of \cite{jiang2023collaborative} model multi-cell ISAC interference considering 3 BSs operating in CBF mode where the users are considered as targets. \cite{liu2023distributed} models the inter-user and inter-subsystem interference. In \cite{zhang2023joint}, the BSs serve the users in the CBF mode and neglect the inter-cell reflection links. \color{black}
The work in \cite{xu2022integrated,li2022beamforming,xu2023joint,xu2023integrated} consider BLP design to achieve various objectives spanning minimizing energy consumption \cite{xu2022integrated}, beampattern mismatch error \cite{xu2023integrated}  or total transmit power \cite{xu2023joint} to maximizing the minimum communication SINR \cite{li2022beamforming} subject to radar and or communication constraints. 
\cite{jiang2023collaborative} employs a collaborative SLP to mitigate mutual interference in a 3-cell ISAC system and minimize the total transmit power.

\subsection{Main Contributions and Paper Organization}
\color{black}The existing multi-cell ISAC works consider no uncertainty in CSI. Moreover, the work in \cite{wei2023symbol,wang2023resource,li2022beamforming,xu2023fundamental,xu2022integrated,xu2023integrated,xu2023joint,jiang2023collaborative,liu2023distributed,zhang2023joint} employ either orthogonal schemes or CoMP for serving the users, and the sensing model follows the properties of CBF in which the reflections from other targets are considered interference. 
The proposed sensing models do not capture the property that the reflections from other targets can aid or degrade the sensing depending on the coordination level among the BSs. Furthermore, the sensing performance is quantified using radar SINR value. Explicit optimization of estimation performance metrics has not been considered in \cite{wei2023symbol,wang2023resource,li2022beamforming,xu2023fundamental,xu2022integrated,xu2023integrated,xu2023joint,jiang2023collaborative,liu2023distributed,zhang2023joint}.  \color{black}We considered this aspect in our previous work \cite{babu2023multi} while deriving the CRB expression for estimating the azimuth angle of a target in a multi-cell ISAC setup. Subsequently, this expression is utilized to design block-level precoders that optimize a weighted combination of sensing and communication metrics without considering channel state uncertainty. It is important to note that the derived CRB expression becomes invalid when dealing with scenarios involving more than one unknown parameter per target or multiple targets per cell. Recall that the precoding design should be robust to CSI uncertainties. 
The work in \cite{wei2019multi} designed robust symbol-level precoders for a multi-cell communication system that minimizes the total transmission power without considering the sensing capability. Even though the problem formulations in the aforementioned works consider multi-cell ISAC scenarios, a relative system performance analysis under different levels of coordination amongst BSs spanning from CBF to CoMP in the presence of channel state estimation error has not been investigated. To the best of our knowledge, explicit optimization of a weighted combination of the estimation error and communication performance metrics under CSI uncertainty has not been considered in the context of a multi-cell ISAC network. \color{black} Our main contributions are: 
\begin{itemize}
 \item We propose a robust precoder design framework to minimize the target parameter estimation error variance and maximize the minimum communication SINR experienced by a user in a multi-cell multi-user MIMO ISAC system \color{black}under CSI uncertainty\color{black}.
    
    \item We consider two multi-cell scenarios, CBF and CoMP, and \color{black}extend the corresponding CRB expressions in \cite{babu2023multi} to multiple parameters (location and complex amplitudes) per target. \color{black}The important distinction between the two cooperation modes in terms of sensing is that in the CRB, the adjacent BSs' signals act as interference to the monostatic sensing of the serving BS, while in CoMP, the same signals can be used as an additional source of bi-static sensing\color{black}.
    \item The derived \color{black}CRB \color{black} expressions are then utilized to formulate optimization problems that jointly minimize the CRB value and maximize the minimum communication SINR value subject to a total transmit power budget. The formulations consider block-level and symbol-level precoding techniques for the CBF and CoMP scenarios.   
    \item We solve all the non-convex optimization problems using a combination of the semidefinite relaxation (SDR) and alternating optimization (AO) methods.
    \item  Finally, we compare the sensing and communication performances of all the considered precoder design schemes through simulations.    
\end{itemize}
The remaining content of the paper is organized as follows: in Section \ref{systemmodel}, we explain the system model and derive the CRB expressions for the CBF and CoMP multi-cell scenarios. Section \ref{BLP} and Section \ref{SLP} discuss the block-level and symbol-level robust precoding schemes for the considered multi-cell scenarios. Finally, in Section \ref{numerical}, we present our main findings through numerical evaluation, and the paper is concluded in Section \ref{conclusion}.
\subsubsection*{Notations} Matrices, vectors, and scalars are denoted by bold uppercase, lowercase, and normal font letters, respectively. We use $\mathrm{tr}()$, $()^{\mathrm{T}}$, $()^{\mathrm{H}}$, and $()^{{*}}$ to represent trace operation, transpose,  Hermitian transpose,  and the complex conjugate of the matrices or vectors. The real and imaginary parts of $\mathbf{x}$ are represented as $\mathbf{x}^{\mathrm{R}}$ and $\mathbf{x}^{\mathrm{I}}$, respectively, and $\|\|$ denotes the $l_{2}$ norm. We represent an $N \times N$ null matrix and an $N \times N$ identity matrix as $\mathbf{0}_{N}$ and $\mathbf{I}_{N}$, respectively. Additionally, $\Vec{0}_{N}$ denotes an $N \times 1$ null vector.
\section{System Model} \label{systemmodel}
\begin{figure}[]
\centering
\captionsetup{justification=centering}
\centerline{\includegraphics[width=0.95\columnwidth]{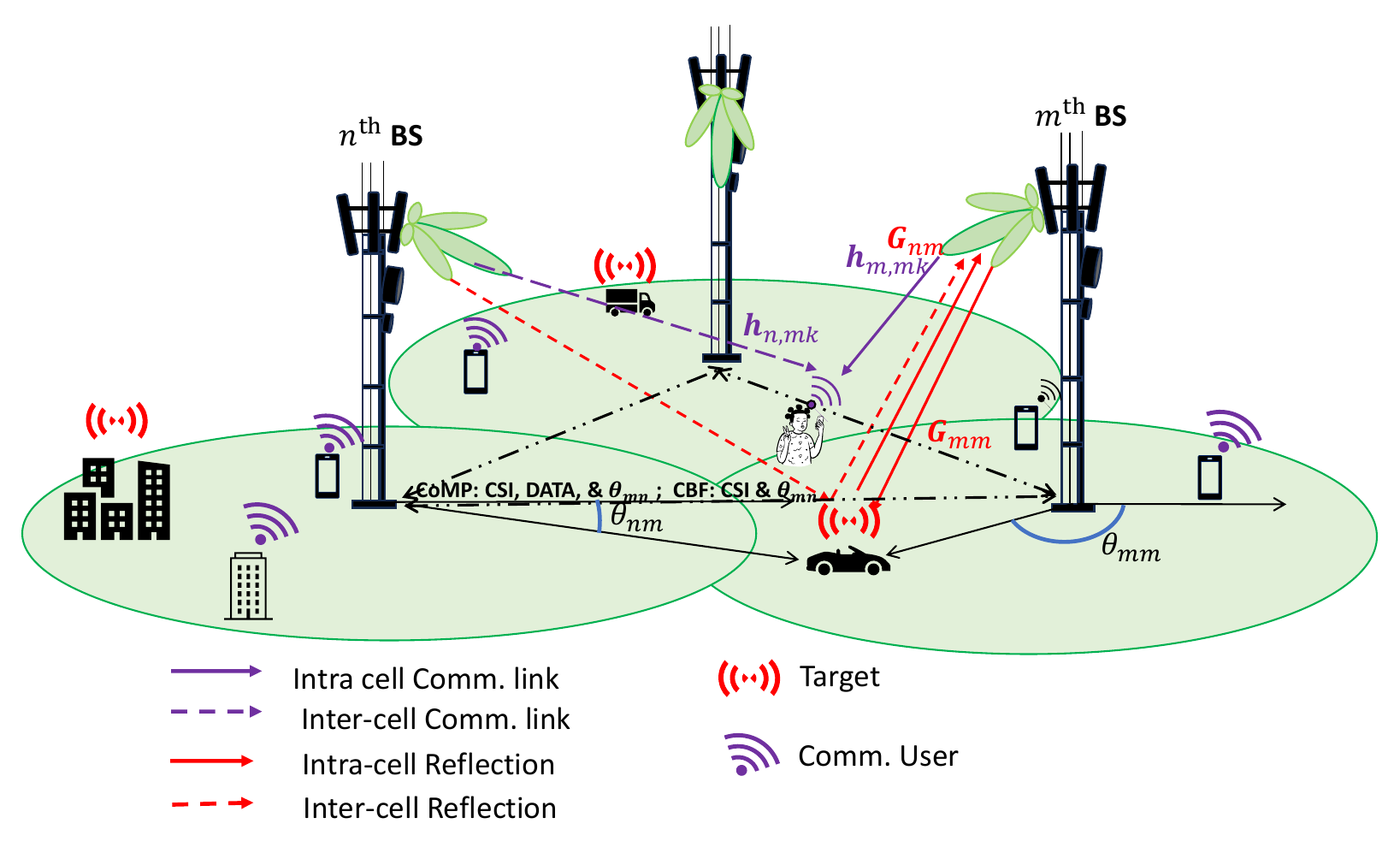}}
\caption{\color{black}System setup.}
\label{figure1}
\end{figure}
\color{black}We consider a multi-cell multi-input single-output (MISO) ISAC system with $J$ cells and $K$ single-antenna users per cell. In addition to the users,  each cell has a target and a BS with a uniform linear array (ULA) of $N_{\mathrm{tx}}$ transmit antennas spaced at $\lambda/2$ distance, where $\lambda$ is the wavelength. The BS is also equipped with a $\lambda/2$-spaced receive ULA of $N_{\mathrm{rx}}$ elements for sensing and isolated from the transmit antenna elements. Moreover, the BSs are interconnected through optical cables to synchronize and exchange information. \color{black}Furthermore, one of the BSs acts as the central node where the precoder design is assumed to occur. The obtained solutions are then communicated to the respective BSs through the optical cables\color{black}.
\subsection{Communication Model and Performance Metric}
\color{black}For the communication service\color{black}, we aim to maximize the minimum SINR value experienced by a user in a cell. Recall that the corresponding SINR expressions vary depending on the BSs' coordination level (CBF or CoMP) and precoding mode, BLP or SLP, and are detailed in the following. The proposed algorithm requires the channel state information (CSI) to be available at the BS and the users. We consider the system operating in time division duplexing (TDD) mode so that the downlink CSI can be derived from the uplink channel observations based on the received uplink sounding reference signal (SRS) transmitted by the users. Additionally, each user employs a simple equalizer for the composite channel $\mathbf{h}^{T}\mathbf{w}$, where $\mathbf{h}$ is the communication channel from the serving BS and $\mathbf{w}$ is the precoding vector. 
\subsubsection{Coordinated Beamforming}
Let $U_{mk}$ represent the $k^{\mathrm{th}}$ user of the $m^{\mathrm{th}}$ cell. In practice, the CSI is imperfectly known at the BS. Let the CSI uncertainty be limited within the spherical set $\mathcal{E}_{mk}=\{\mathbf{e}_{mk}:\|\mathbf{e}_{mk}\|^{2}\leq \delta^{2}\}$. Then, the actual CSI, $\mathbf{\tilde{h}}_{n,mk}$, from the $n^{\mathrm{th}}$ BS to $U_{mk}$ is the sum of the observed CSI, denoted as $\{\mathbf{h}_{n,mk}\}$, and $\mathbf{e}_{mk}$:
\begin{align}
   \mathbf{\tilde{h}}_{n,mk} &= \mathbf{{h}}_{n,mk}+\mathbf{e}_{mk}, \, \forall U_{mk}.
\end{align}
Let the $m^{\mathrm{th}}$ BS transmits the symbol matrix $\mathbf{X}_{m}=\left[\mathbf{x}_{m1},\mathbf{x}_{m2},...,\mathbf{x}_{mL}\right]=\mathbf{W}_{m}\mathbf{S}_{m}  \, \in \mathcal{C}^{N_{\mathrm{tx}}\times L}$, where $\mathbf{W}_{m}=\left[\mathbf{w}_{m1},\mathbf{w}_{m2},...,\mathbf{w}_{mK}\right]$ $\forall$ $m \in \{1,2,.., J\}$ are the dual-functional beamforming matrices to be designed, with $L > N_{\mathrm{tx}} $ being the length of the radar pulse/ communication frame. Here, $\mathbf{S}_{m}\in \mathcal{C}^{K \times L}$ is the orthogonal data stream transmitted to $K$ users of the $m^{\mathrm{th}}$ BS: $(1/L)\mathbf{S}_{m}\mathbf{S}^{H}_{m}=\mathbf{I}_{K}$. Then the received signal at $U_{mk}$ is expressed as 
\begin{IEEEeqnarray}{rCl}
\mathbf{y}_{mk}^{\mathrm{C}} &=& \mathbf{\tilde{h}}^{T}_{m,mk}\mathbf{X}_{m}+\sum_{n\neq m}^{J} \mathbf{\tilde{h}}^{T}_{n,mk}\mathbf{X}_{n}+ \mathbf{z}_{mk}^{\mathrm{C}},\label{yc} 
\end{IEEEeqnarray}
where $\mathbf{z}_{mk}^{\mathrm{C}} \, \in \mathcal{C}^{1 \times L}$ is an AWGN noise vector with variance of each entry being $\sigma^{2}_{\mathrm{C}}$. Since no data is shared among the BSs in CBF mode, the first term in the RHS of \eqref{yc} contains the useful signal and intra-cell interference from the users of the same cell, whereas the second term represents the inter-cell interference from the neighboring cells. In a multi-carrier system, for example, based on orthogonal frequency-division multiplexing (OFDM), the input–output model \eqref{yc} could describe one of the subcarriers. 
\subsubsection{Coordinated Multipoint}
Since the BSs share the user data and the CSI in the CoMP mode, the $J\cdot K$ users in the multi-cell ISAC system can be considered to be served by a virtual single-cell BS with $N=J\cdot N_{\mathrm{tx}}$ antennas. In this scenario, the precoder design occurs at a designated base station serving as the central node. The resulting precoder matrices are then shared with the respective base stations through optical cables. Let $U_{k}$ for $k \in \{1,2,..,J\cdot K\}$ represent the $k^{\mathrm{th}}$ user of the virtual single-cell system. Note that $U_{k}$ for $k \in \{K\cdot (m-1)+1,K\cdot i+2,…,K\cdot m\}$ represents $K$ users the $m^{\mathrm{th}}$ BS. Let $\mathbf{\tilde{h}}_{k}\in \mathcal{C}^{N\times 1}$ be the actual channel vector from all the BSs to $U_{k}$ represented as
\begin{align}
   \mathbf{\tilde{h}}_{k} &= \mathbf{{h}}_{k}+\mathbf{e}_{k},
\end{align}
where $\mathbf{e}_{k} \in \mathcal{E}_{k}=\{\mathbf{e}_{k}:\|\mathbf{e}_{k}\|^{2}\leq J\delta^{2}\}$. The received signal at $U_{k}$ is expressed as 
\begin{IEEEeqnarray}{rCl}
\mathbf{y}_{k}^{\mathrm{C}} &=& \mathbf{\tilde{h}}^{T}_{k}\mathbf{X}+ \mathbf{z}_{k}^{\mathrm{C}}, \label{yccomp}
\end{IEEEeqnarray}
where $\mathbf{X}=\left[\mathbf{X}_{1};\mathbf{X}_{2},...;\mathbf{X}_{J}\right]\in \mathcal{C}^{N \times L}$ is the concatenated symbol matrix available at each BS and $\mathbf{z}_{k}^{\mathrm{C}} \, \in \mathcal{C}^{1 \times L}$ is an AWGN noise vector with variance of each entry being $\sigma^{2}_{\mathrm{C}}$.
Since the intra-cell and inter-cell interference are differently treated by BLP and SLP, we will give the corresponding SINR expressions in the respective sections.\color{black}
\subsection{Sensing Model and Performance Metric}
\color{black}As shown in Fig. \ref{figure1}, the BSs transmit simultaneously, and each BS receives its echo signal and multiple echo signals from the neighboring BSs due to ICR; we consider the dominant path among the ICR paths and the signal power received through the weaker paths is included in the noise term. Hence, the resulting echo signal received by the $m^{\mathrm{th}}$ BS from its target is given as
\begin{IEEEeqnarray}{rCl}
\mathbf{Y}_{m}^{\mathrm{I}} &=& \mathbf{G}_{mm}\mathbf{X}_{m}+ \underbrace{\sum_{n\neq m}^{J} \mathbf{G}_{nm}\mathbf{X}_{n}}_{\mathrm{ICR}}+ \mathbf{Z}_{m}^{\mathrm{R}},\label{yr} 
\end{IEEEeqnarray}
where 
$\mathbf{G}_{nm}=\alpha_{nm}\mathbf{a}_{mm}\mathbf{v}^{\mathrm{T}}_{nm}\, \forall m, n = \{1,2,..,J\} \equiv \mathcal{J}$, is the target response matrix at the $m^{\mathrm{th}}$ BS due to the transmission from the $n^{\mathrm{th}}$ BS in which $\mathbf{a}_{mm}$ and $\mathbf{v}_{nm}$ are the array response vectors in the directions at the angle-of-arrival $\theta_{mm}$ and the angle of departure $\theta_{nm}$, respectively. Here, $\alpha_{nm}$ represents the complex amplitude of the received signal due to pathloss and the radar cross section of the target and $\mathbf{Z}_{m}^{\mathrm{R}}\in \mathbb{C}^{N_{\mathrm{rx}}\times L}$ is an additive white Gaussian noise (AWGN) vector with the variance of each entry being $\sigma^{2}_{\mathrm{R}}$. Equation \eqref{yr} assumes that all the neighboring BSs have a LoS link to the $m^{\mathrm{th}}$ BS's target. 
The first term on the right-hand side (RHS) of \eqref{yr} is \color{black}the mono-static \color{black} intra-cell reflection due to the signal vector from the same BS, whereas the second term represents ICR due to the signals from the remaining BSs. \color{black}Note that when $J=1$, \eqref{yr} reduces to the sensing model proposed in \cite{liu2021cramer}\color{black}. \color{black}Furthermore, the transmit antennas at the BS are conventionally down-tilted, resulting in negligible BS-BS interference received through the side lobes of the transmit beampattern \cite{benzaghta2023designing}, and hence in \eqref{yr}, we neglect the direct interference link between the BSs.\color{black}
\subsubsection{Sensing Performance Metric}
\color{black}We assume the target parameter estimation happens locally in the CBF and CoMP modes. Consequently, each BS needs to estimate three parameters of its target: $\boldsymbol{\zeta}_{m}=\{\theta_{mm},\alpha ^{\mathrm{R}}_{mm},\alpha ^{\mathrm{I}}_{mm}\}\equiv \{{\zeta}_{ml}\}$. The sensing process aims to estimate the target parameters using the received echo signal samples $\mathbf{Y}_{m}^{\mathrm{I}}$. We aim to minimize the variance  of the error in the parameter estimation. For an unbiased estimator, the error variance is lower bounded by the CRB given by the inverse of the Fisher information matrix (FIM). Here, the FIM will be a $3\times 3$ matrix  \footnote{\color{black}The extension to multiple targets and multiple parameters can be done using (10) of \cite{li2007range}.} given by \cite{li2007range},
\begin{align}
   \mathbf{F}_{m,*}=2\begin{bmatrix}{F11}_{m,*}^{\mathrm{R}}&{F12}_{m,*}^{\mathrm{R}}&-{F12}_{m,*}^{\mathrm{I}}\\{F12}_{m,*}^{\mathrm{R}}&{F22}_{m,*}^{\mathrm{R}}&-{F22}_{m,*}^{\mathrm{I}}\\-{F12}_{m,*}^{\mathrm{I}}&-{F22}_{m,*}^{\mathrm{I}}&{F22}_{m,*}^{\mathrm{R}}\\\end{bmatrix} \label{fmatrix}
\end{align}
where 
\begin{align}
{Flp}_{m,*} &= \mathrm{tr}\left( \frac{d \boldsymbol{\mu}^{H}_{m,*}}{d {\zeta}_{ml}} \mathbf{C}_{m,*}^{-1} \frac{d \boldsymbol{\mu}_{m,*}}{d {\zeta}_{mp}} \right) \,\, \forall m; l,p \in \{1,2\}; p>=l. \label{fim}
\end{align}
Equation \eqref{fim} is derived from the observation that the received echo signal at the $m^{\mathrm{th}}$ BS is a multi-variate Gaussian random variable with mean $\boldsymbol{\mu}_{m,*}$ and covariance matrix $\mathbf{C}_{m,*}$. The entries of the $\boldsymbol{\mu}_{m,*}$  and $\mathbf{C}_{m,*}$ depend on whether the BSs are operating in the CBF or the CoMP mode, whose corresponding expressions are derived in the following propositions by extending the CRB derivation in \cite{li2007range} to a multi-cell ISAC scenario. \color{black}Additionally, when $\boldsymbol{\zeta}_{m}=\{\theta_{mm}\}$, \eqref{fim} represents the Fisher information value derived in \cite{babu2023multi}\color{black}. 
  \subsubsection{\color{black}Coordinated Beamforming for Mono-Static Sensing}
  \color{black}In the case of CBF, ICR acts as an interference term to the serving BS's mono-static sensing. Accordingly, the CRB can be calculated as below\color{black}.
\begin{proposition}
The following equalities hold for the $m^{\mathrm{th}}$ BS in CBF mode with $\dot{\mathbf{a}}_{mm}$ and $\dot{\mathbf{v}}_{mm}$ being the derivatives of ${\mathbf{a}}_{mm}$ and ${\mathbf{v}}_{mm}$ with respect to $\theta_{mm}$:
\begin{align}
&\frac{1}{L\alpha^{2}_{mm}}F11_{m,\mathrm{cbf}}=
\dot{\mathbf{a}}^{H}_{mm} \mathbf{C}_{m,\mathrm{cbf}}^{-1} \dot{\mathbf{a}}_{mm} \cdot  \mathbf{v}^H_{mm}\mathbf{R}^{*}_{\mathbf{X}_m}\mathbf{v}_{mm}\nonumber\\ 
&+ \dot{\mathbf{a}}^{H}_{mm} \mathbf{C}_{m,\mathrm{cbf}}^{-1} {\mathbf{a}}_{mm} \cdot  \mathbf{v}^H_{mm}\mathbf{R}^{*}_{\mathbf{X}_m}\dot{\mathbf{v}}_{mm}\nonumber\\
&+ {\mathbf{a}}^{H}_{mm} \mathbf{C}_{m,\mathrm{cbf}}^{-1} \dot{\mathbf{a}}_{mm} \cdot  \dot{\mathbf{v}}^H_{mm}\mathbf{R}^{*}_{\mathbf{X}_m}\mathbf{v}_{mm}\nonumber\\    
&+{\mathbf{a}}^{H}_{mm} \mathbf{C}_{m,\mathrm{cbf}}^{-1} {\mathbf{a}}_{mm} \cdot  \dot{\mathbf{v}}^H_{mm}\mathbf{R}^{*}_{\mathbf{X}_m}\dot{\mathbf{v}}_{mm}\label{fim_part}\\
&\color{black}\frac{1}{L\alpha_{mm}}F12_{m,\mathrm{cbf}}=
\dot{\mathbf{a}}^{H}_{mm} \mathbf{C}_{m,\mathrm{cbf}}^{-1} {\mathbf{a}}_{mm} \cdot  \mathbf{v}^H_{mm}\mathbf{R}^{*}_{\mathbf{X}_m}\mathbf{v}_{mm}\nonumber\\ 
&\color{black}+ {\mathbf{a}}^{H}_{mm} \mathbf{C}_{m,\mathrm{cbf}}^{-1} {\mathbf{a}}_{mm} \cdot  \dot{\mathbf{v}}^H_{mm}\mathbf{R}^{*}_{\mathbf{X}_m}{\mathbf{v}}_{mm}\label{fim_part1}\\
&\color{black}\frac{1}{L}F22_{m,\mathrm{cbf}}=
{\mathbf{a}}^{H}_{mm} \mathbf{C}_{m,\mathrm{cbf}}^{-1} {\mathbf{a}}_{mm} \cdot  \mathbf{v}^H_{mm}\mathbf{R}^{*}_{\mathbf{X}_m}\mathbf{v}_{mm} \label{fim_part2}
\end{align}
where,
\begin{align}
\mathbf{C}_{m,\mathrm{cbf}} &= L\sum_{n\neq m}^{J} \mathbf{G}_{nm}\mathbf{W}_{n} \mathbf{W}^{H}_{n}\mathbf{G}^{H}_{nm} +  \sigma^{2}_{\mathrm{R}}\mathbf{I}_{N_{r}}, \label{cmcbf}
\end{align}
where $\mathbf{R}_{\mathbf{X}_m}= \frac{1}{L}\mathbf{X}_{m} \mathbf{X}^{\mathrm{H}}_{m} =  \mathbf{W}_{m}  \mathbf{W}^{H}_{m}=\sum_{k=1}^{K}\mathbf{w}_{mk}  \mathbf{w}^{H}_{mk}=\sum_{k=1}^{K}\mathbf{W}_{mk}$.
\end{proposition}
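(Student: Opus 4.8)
The plan is to specialize the generic Fisher-information expression \eqref{fim} to the CBF echo model \eqref{yr} and reduce the resulting traces to the quadratic forms claimed in the statement. First I would pin down the Gaussian model behind \eqref{fim} in the CBF case: since a BS in CBF mode knows only its own data $\mathbf{X}_m$, the ICR term $\sum_{n\neq m}\mathbf{G}_{nm}\mathbf{X}_n$ cannot be demodulated and is treated as a zero-mean colored-Gaussian interference added to $\mathbf{Z}_m^{\mathrm{R}}$. Hence $\mathbf{Y}_m^{\mathrm{I}}$ is conditionally Gaussian with mean $\boldsymbol{\mu}_{m,\mathrm{cbf}}=\mathbf{G}_{mm}\mathbf{X}_m=\alpha_{mm}\mathbf{a}_{mm}\mathbf{v}^{\mathrm{T}}_{mm}\mathbf{X}_m$, while the data orthogonality $(1/L)\mathbf{S}_n\mathbf{S}_n^{H}=\mathbf{I}_K$ together with $\mathbf{R}_{\mathbf{X}_n}=\mathbf{W}_n\mathbf{W}_n^{H}$ identifies the interference-plus-noise covariance as $\mathbf{C}_{m,\mathrm{cbf}}$ in \eqref{cmcbf}, the factor $L$ reflecting accumulation over the $L$ snapshots. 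The parameters entering \eqref{fim} are $\zeta_{m1}=\theta_{mm}$ and the complex $\zeta_{m2}=\alpha_{mm}$, and the real $3\times3$ FIM \eqref{fmatrix} is recovered from $F11_{m,\mathrm{cbf}},F12_{m,\mathrm{cbf}},F22_{m,\mathrm{cbf}}$ by splitting $\alpha_{mm}$ into real and imaginary parts exactly as in \cite{li2007range}; so it suffices to establish the three displayed identities.

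Next I would differentiate the mean. Because only $\mathbf{a}_{mm}$ and $\mathbf{v}_{mm}$ depend on $\theta_{mm}$, the product rule gives $\frac{d\boldsymbol{\mu}_{m,\mathrm{cbf}}}{d\theta_{mm}}=\alpha_{mm}\bigl(\dot{\mathbf{a}}_{mm}\mathbf{v}^{\mathrm{T}}_{mm}+\mathbf{a}_{mm}\dot{\mathbf{v}}^{\mathrm{T}}_{mm}\bigr)\mathbf{X}_m$ and $\frac{d\boldsymbol{\mu}_{m,\mathrm{cbf}}}{d\alpha_{mm}}=\mathbf{a}_{mm}\mathbf{v}^{\mathrm{T}}_{mm}\mathbf{X}_m$. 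Substituting these into \eqref{fim} and using the cyclic property of the trace, I would move $\mathbf{X}_m\mathbf{X}_m^{H}$ to the right and replace it by $L\mathbf{R}_{\mathbf{X}_m}$. Each resulting term then factors as a scalar of the form $\mathbf{b}^{H}\mathbf{C}_{m,\mathrm{cbf}}^{-1}\mathbf{c}$ with $\mathbf{b},\mathbf{c}\in\{\mathbf{a}_{mm},\dot{\mathbf{a}}_{mm}\}$, multiplied by a rank-one matrix $\mathbf{u}^{*}\mathbf{w}^{\mathrm{T}}$ with $\mathbf{u},\mathbf{w}\in\{\mathbf{v}_{mm},\dot{\mathbf{v}}_{mm}\}$, so that $\mathrm{tr}\bigl(\mathbf{u}^{*}\mathbf{w}^{\mathrm{T}}\mathbf{R}_{\mathbf{X}_m}\bigr)=\mathbf{w}^{\mathrm{T}}\mathbf{R}_{\mathbf{X}_m}\mathbf{u}^{*}=\mathbf{u}^{H}\mathbf{R}^{*}_{\mathbf{X}_m}\mathbf{w}$, the last equality using $\mathbf{R}_{\mathbf{X}_m}=\mathbf{R}_{\mathbf{X}_m}^{H}$. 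For $F11$ the bilinear expansion of $\dot{\mathbf{a}}\mathbf{v}^{\mathrm{T}}+\mathbf{a}\dot{\mathbf{v}}^{\mathrm{T}}$ against its conjugate produces precisely the four cross terms of \eqref{fim_part}; for $F12$ only the two terms carrying $\mathbf{a}_{mm}$ on the right survive, giving \eqref{fim_part1}; and $F22$ collapses to the single term \eqref{fim_part2}. The accumulated scalar prefactors are $L\alpha^{2}_{mm}$, $L\alpha_{mm}$ and $L$, which yields the normalizations on the left-hand sides.

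The bulk of this is routine linear algebra; the step that needs genuine care is tracking conjugates and transposes through the trace manipulation so that every scalar ends up in the symmetric form $\mathbf{u}^{H}\mathbf{R}^{*}_{\mathbf{X}_m}\mathbf{w}$ and the two contributions with $\dot{\mathbf{v}}$ on opposite sides are correctly matched up. The one modeling point worth stressing is the legitimacy of folding the ICR into the Gaussian covariance $\mathbf{C}_{m,\mathrm{cbf}}$ rather than into the mean --- this is exactly the feature that separates the CBF (mono-static, ICR-as-interference) derivation from the CoMP one, where the same signals instead enrich the mean.
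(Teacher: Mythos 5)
Your proposal is correct and follows essentially the same route as the paper's proof: identify $\boldsymbol{\mu}_{m,\mathrm{cbf}}=\mathbf{G}_{mm}\mathbf{X}_{m}$ with the ICR folded into the Gaussian covariance $\mathbf{C}_{m,\mathrm{cbf}}$, differentiate the mean with respect to $\theta_{mm}$ and $\alpha_{mm}$, and substitute into \eqref{fim} using the cyclic property of the trace to replace $\mathbf{X}_m\mathbf{X}_m^{H}$ by $L\mathbf{R}_{\mathbf{X}_m}$ and factor each term into the quadratic forms $\mathbf{b}^{H}\mathbf{C}_{m,\mathrm{cbf}}^{-1}\mathbf{c}\cdot\mathbf{u}^{H}\mathbf{R}^{*}_{\mathbf{X}_m}\mathbf{w}$. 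Your write-up is in fact more explicit than the paper's about the conjugate/transpose bookkeeping and about why the ICR enters the covariance rather than the mean.
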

\begin{proof}
In CBF, as no data is shared between the BSs, the $m^{\mathrm{th}}$ BS knows only the data symbol matrix $\mathbf{X}_{m}$. Therefore, from \eqref{yr}, $\boldsymbol{\mu}_{m,\mathrm{cbf}}=\mathbf{G}_{mm}\mathbf{X}_{m}$ and $\mathbf{X}_{n}$ act as interference. Using the definitions of $\mathbf{G}_{mm}$, we have
\begin{align}
   & \frac{d \boldsymbol{\mu}_{m,\mathrm{cbf}}}{d {\theta}_{mm}} = \left(\alpha_{mm} \dot{\mathbf{a}}_{mm}{\mathbf{v}}^{T}_{mm}+\alpha_{mm}{\mathbf{a}}_{mm} \dot{\mathbf{v}}^{T}_{mm}\right)\mathbf{X}_{m}, \label{mucbf}\\
   & \color{black}\frac{d \boldsymbol{\mu}_{m,\mathrm{cbf}}}{d {\alpha}^{\mathrm{R}}_{mm}} = {\mathbf{a}}_{mm}{\mathbf{v}}^{T}_{mm}\mathbf{X}_{m}\label{mucbf1}.
\end{align}
 We get \eqref{fim_part}-\eqref{fim_part2} using \eqref{mucbf} and \eqref{mucbf1}  and the cyclic property of the trace operation: $\mathrm{tr}(ABC) =
\mathrm{tr}(BCA)$ in \eqref{fim_part}.
\end{proof}
\subsubsection{\color{black}Coordinated Multipoint Enabling Bi-Static Sensing}
\color{black}In the case of CoMP, ICR becomes useful as an additional source of bi-static sensing, in addition to the serving BS's mono-static sensing. Accordingly, the CRB can be calculated as below\color{black}.
\begin{proposition}
For the $m^{\mathrm{th}}$ BS configured in CoMP mode, the following equalities hold:
\begin{align}
& F11_{m,\mathrm{comp}}= L\alpha^{2}_{m,m}\dot{\mathbf{a}}^{H}_{mm} \mathbf{C}^{-1}_{m,\mathrm{cmp}} \dot{\mathbf{a}}_{mm} \cdot  \mathbf{v}^{'^H}_{mm}\mathbf{D}_{m}\mathbf{R}^{*}_{\mathbf{X}}\mathbf{D}^{H}_{m}\mathbf{v}^{'}_{mm}\nonumber\\
    &+ L\alpha^{2}_{mm}\dot{\mathbf{a}}^{H}_{mm} \mathbf{C}^{-1}_{m,\mathrm{cmp}} {\mathbf{a}}_{mm} \cdot  {\mathbf{v}}^{'^H}_{mm}\mathbf{D}_{m}\mathbf{R}^{*}_{\mathbf{X}}\mathbf{D}^{H}_{m}\dot{\mathbf{v}}^{'}_{mm}\nonumber\\    
    &+ L\alpha^{2}_{mm}{\mathbf{a}}^{H}_{mm} \mathbf{C}^{-1}_{m,\mathrm{cmp}} \dot{\mathbf{a}}_{mm} \cdot  {{\dot{\mathbf{v}}}}^{'^H}_{mm}\mathbf{D}_{m}\mathbf{R}^{*}_{\mathbf{X}}\mathbf{D}^{H}_{m}\mathbf{v}^{'}_{mm}\nonumber\\   
    &+L\alpha^{2}_{mm}{\mathbf{a}}^{H}_{mm} \mathbf{C}^{-1}_{m,\mathrm{cmp}} {\mathbf{a}}_{mm} \cdot  {{\dot{\mathbf{v}}}}^{'^H}_{mm}\mathbf{D}_{m}\mathbf{R}^{*}_{\mathbf{X}}\mathbf{D}^{H}_{m}{\dot{\mathbf{v}}}^{'}_{mm}\nonumber
    \end{align}
\begin{align}
     &+L\alpha_{mm}\dot{\mathbf{a}}^{H}_{mm} \mathbf{C}^{-1}_{m,\mathrm{cmp}} \dot{\mathbf{a}}_{mm}\cdot \sum_{n\neq m}^{J}\alpha_{nm}{{\mathbf{v}}}^{'^H}_{nm}\mathbf{D}_n\mathbf{R}^{*}_{\mathbf{X}}\mathbf{D}^{H}_{m}\mathbf{v}^{'}_{mm}\nonumber\\      
     &+L\alpha_{mm}{\mathbf{a}}^{H}_{mm} \mathbf{C}^{-1}_{m,\mathrm{cmp}} \dot{\mathbf{a}}_{mm}\cdot \sum_{n\neq m}^{J}\alpha_{nm} {\mathbf{v}}^{'^H}_{nm}\mathbf{D}_n\mathbf{R}^{*}_{\mathbf{X}}\mathbf{D}^{H}_{m}{\dot{\mathbf{v}}}^{'}_{mm}\nonumber\\
     &+L\alpha_{mm}\dot{\mathbf{a}}^{H}_{mm} \mathbf{C}^{-1}_{m,\mathrm{cmp}} \dot{\mathbf{a}}_{mm} \cdot \sum_{n\neq m}^{J}\alpha_{nm}  {{\mathbf{v}}}^{'^H}_{mm}\mathbf{D}_m\mathbf{R}^{*}_{\mathbf{X}}\mathbf{D}^{H}_{n}\mathbf{v}^{'}_{nm}\nonumber\\
     &+L\alpha_{mm}\dot{\mathbf{a}}^{H}_{mm} \mathbf{C}^{-1}_{m,\mathrm{cmp}} {\mathbf{a}}_{mm} \cdot \sum_{n\neq m}^{J}\alpha_{nm} {{\dot{\mathbf{v}}}}^{'^H}_{mm}\mathbf{D}_m\mathbf{R}^{*}_{\mathbf{X}}\mathbf{D}^{H}_{n}\mathbf{v}^{'}_{nm}\nonumber\\
     &+L\alpha_{mm}\dot{\mathbf{a}}^{H}_{mm} \mathbf{C}^{-1}_{m,\mathrm{cmp}}\dot{\mathbf{a}}_{mm} \cdot \sum_{n\neq m}^{J}\sum_{n\neq m}^{J}\alpha_{nm} {{\mathbf{v}}}^{'^ H}_{nm}\mathbf{D}_n\mathbf{R}^{*}_{\mathbf{X}}\mathbf{D}^{H}_{n}\mathbf{v}^{'}_{nm}, \label{compfim}\\  
     &\color{black} F12_{m,\mathrm{comp}}= L\alpha_{mm}\dot{\mathbf{a}}^{H}_{mm} \mathbf{C}^{-1}_{m,\mathrm{cmp}} {\mathbf{a}}_{mm} \cdot  \mathbf{v}^{'^H}_{mm}\mathbf{D}_{m}\mathbf{R}^{*}_{\mathbf{X}}\mathbf{D}^{H}_{m}\mathbf{v}^{'}_{mm}\nonumber\\
     &\color{black}+ L\alpha_{mm}{\mathbf{a}}^{H}_{mm} \mathbf{C}^{-1}_{m,\mathrm{cmp}} {\mathbf{a}}_{mm} \cdot  \dot{\mathbf{v}}^{'^H}_{mm}\mathbf{D}_{m}\mathbf{R}^{*}_{\mathbf{X}}\mathbf{D}^{H}_{m}{\mathbf{v}}^{'}_{mm}\nonumber\\ 
      &\color{black}+ L\dot{\mathbf{a}}^{H}_{mm} \mathbf{C}^{-1}_{m,\mathrm{cmp}}{\mathbf{a}}_{mm}\cdot \sum_{n\neq m}^{J}\alpha_{nm} {\mathbf{v}}^{'^H}_{nm}\mathbf{D}_n\mathbf{R}^{*}_{\mathbf{X}}\mathbf{D}^{H}_{m}\mathbf{v}^{'}_{mm} \label{compfim1}\\
     &\color{black} F22_{m,\mathrm{comp}} = L{\mathbf{a}}^{H}_{mm} \mathbf{C}^{-1}_{m,\mathrm{cmp}} {\mathbf{a}}_{mm} \cdot  \mathbf{v}^{'^H}_{mm}\mathbf{D}_{m}\mathbf{R}^{*}_{\mathbf{X}}\mathbf{D}^{H}_{m}\mathbf{v}^{'}_{mm}\label{compfim2}
\end{align}
where $\mathbf{D}_{m}=\mathrm{diag}( \mathbf{0}_{N_{\mathrm{tx}}},.., \mathbf{I}_{N_\mathrm{tx}},.., \mathbf{0}_{N_{\mathrm{tx}}})\, \in \mathcal{C}^{N \times N}$; $\mathbf{v}^{'}_{mm}=\{\Vec{0}_{N_{\mathrm{tx}}},..,\mathbf{v}_{mm},..,\Vec{0}_{N_{\mathrm{tx}}}\}\, \in \mathcal{C}^{N \times 1}$.
\end{proposition}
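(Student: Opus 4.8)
Proposition~2 is the CoMP counterpart of Proposition~1, so the plan is to rerun the same Fisher-information computation of \cite{li2007range} that underlies Proposition~1, changing only the statistics of the echo to reflect that in CoMP every BS knows the whole symbol matrix $\mathbf{X}=[\mathbf{X}_1;\dots;\mathbf{X}_J]$. Reading \eqref{yr} with all $\mathbf{X}_n$ known, $\mathbf{Y}_m^{\mathrm{I}}$ is complex Gaussian with a parameter-independent covariance $\mathbf{C}_{m,\mathrm{cmp}}=\sigma_{\mathrm{R}}^{2}\mathbf{I}_{N_{\mathrm{rx}}}$ --- the inter-cell reflections no longer enter the covariance, unlike in \eqref{cmcbf} --- and with mean equal to the \emph{entire} signal term, $\boldsymbol{\mu}_{m,\mathrm{comp}}=\sum_{n=1}^{J}\mathbf{G}_{nm}\mathbf{X}_n=\sum_{n=1}^{J}\alpha_{nm}\mathbf{a}_{mm}\mathbf{v}_{nm}^{\mathrm{T}}\mathbf{X}_n$. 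Since $\mathbf{v}'_{nm}$ lives in the $n$-th antenna block and $\mathbf{D}_n$ selects that block, $\mathbf{v}_{nm}^{\mathrm{T}}\mathbf{X}_n=(\mathbf{v}'_{nm})^{\mathrm{T}}\mathbf{D}_n\mathbf{X}$, so $\boldsymbol{\mu}_{m,\mathrm{comp}}=\mathbf{a}_{mm}\sum_{n}\alpha_{nm}(\mathbf{v}'_{nm})^{\mathrm{T}}\mathbf{D}_n\mathbf{X}$ is expressed on the single virtual array.

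Next I would differentiate $\boldsymbol{\mu}_{m,\mathrm{comp}}$ with respect to $\{\theta_{mm},\alpha^{\mathrm{R}}_{mm},\alpha^{\mathrm{I}}_{mm}\}$, treating the neighbouring geometry $\{\theta_{nm},\alpha_{nm}\}_{n\neq m}$ as known. Only the $n=m$ summand carries $\alpha_{mm}$, and only $\mathbf{a}_{mm}$ and $\mathbf{v}_{mm}$ carry $\theta_{mm}$, so $d\boldsymbol{\mu}_{m,\mathrm{comp}}/d\theta_{mm}$ is the sum of $\dot{\mathbf{a}}_{mm}\sum_{n}\alpha_{nm}(\mathbf{v}'_{nm})^{\mathrm{T}}\mathbf{D}_n\mathbf{X}$ and $\alpha_{mm}\mathbf{a}_{mm}(\dot{\mathbf{v}}'_{mm})^{\mathrm{T}}\mathbf{D}_m\mathbf{X}$, while $d\boldsymbol{\mu}_{m,\mathrm{comp}}/d\alpha^{\mathrm{R}}_{mm}=\mathbf{a}_{mm}(\mathbf{v}'_{mm})^{\mathrm{T}}\mathbf{D}_m\mathbf{X}$ (times $j$ for $\alpha^{\mathrm{I}}_{mm}$), paralleling \eqref{mucbf}--\eqref{mucbf1}. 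Substituting into \eqref{fim}, each derivative factors as (a receive-array vector)$\,\times\,$(a row built from $\mathbf{X}$), so the receive-dimension scalars $\dot{\mathbf{a}}_{mm}^{H}\mathbf{C}_{m,\mathrm{cmp}}^{-1}\dot{\mathbf{a}}_{mm}$, $\dot{\mathbf{a}}_{mm}^{H}\mathbf{C}_{m,\mathrm{cmp}}^{-1}\mathbf{a}_{mm}$ and $\mathbf{a}_{mm}^{H}\mathbf{C}_{m,\mathrm{cmp}}^{-1}\mathbf{a}_{mm}$ pull out of the trace by the cyclic property $\mathrm{tr}(ABC)=\mathrm{tr}(BCA)$, leaving traces of data products that collapse through $\tfrac{1}{L}\mathbf{X}\mathbf{X}^{H}=\mathbf{R}_{\mathbf{X}}$. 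Transposing the resulting scalars and using $\mathbf{D}_n^{\mathrm{T}}=\mathbf{D}_n^{H}=\mathbf{D}_n$ and $\mathbf{R}_{\mathbf{X}}^{\mathrm{T}}=\mathbf{R}_{\mathbf{X}}^{*}$ brings each one to the Hermitian form $(\mathbf{v}'_{\cdot m})^{H}\mathbf{D}_{\cdot}\mathbf{R}^{*}_{\mathbf{X}}\mathbf{D}^{H}_{\cdot}\mathbf{v}'_{\cdot m}$ that appears in \eqref{compfim}--\eqref{compfim2}. The $3\times3$ assembly \eqref{fmatrix} from the real and imaginary parts of $F11_{m,\mathrm{comp}},F12_{m,\mathrm{comp}},F22_{m,\mathrm{comp}}$ then holds verbatim, as it is valid for any complex Gaussian model with parameter-independent covariance, as in \cite{li2007range}.

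The only real work is the bookkeeping for $F11_{m,\mathrm{comp}}$. Because $d\boldsymbol{\mu}_{m,\mathrm{comp}}/d\theta_{mm}$ has a $\dot{\mathbf{a}}_{mm}$-term carrying the full sum $\sum_{n}\alpha_{nm}(\mathbf{v}'_{nm})^{\mathrm{T}}\mathbf{D}_n\mathbf{X}$ but an $\mathbf{a}_{mm}$-term carrying only the $n=m$ piece, the quadratic form $(d\boldsymbol{\mu}_{m,\mathrm{comp}}/d\theta_{mm})^{H}\mathbf{C}_{m,\mathrm{cmp}}^{-1}(d\boldsymbol{\mu}_{m,\mathrm{comp}}/d\theta_{mm})$ expands into four products; the product of the two $\dot{\mathbf{a}}_{mm}$-terms produces a \emph{double} sum over BS indices, which when split into its $(m,m)$ part (the pure mono-static contribution, scaled by $\alpha_{mm}^{2}$), its two mono-static$\times$bi-static cross parts ($n=m,n'\neq m$ and $n\neq m,n'=m$) and its bi-static$\times$bi-static part ($n,n'\neq m$), together with the two products pairing an $\dot{\mathbf{a}}_{mm}$-term with an $\mathbf{a}_{mm}$-term (one sum each) and the product of the two $\mathbf{a}_{mm}$-terms (one term), yields $4+2+2+1=9$ terms, which is exactly \eqref{compfim}. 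For $F12_{m,\mathrm{comp}}$ and $F22_{m,\mathrm{comp}}$ the $\alpha$-derivative retains only the $n=m$ summand, so $F12_{m,\mathrm{comp}}$ collapses to three terms and $F22_{m,\mathrm{comp}}$ to one, matching \eqref{compfim1}--\eqref{compfim2}. Beyond tracking the indices and the conjugates, the computation is the routine algebra of the proof of Proposition~1, with $\mathbf{R}_{\mathbf{X}_m}$ replaced by the appropriate block $\mathbf{D}_n\mathbf{R}_{\mathbf{X}}\mathbf{D}^{H}_{n'}$ of $\mathbf{R}_{\mathbf{X}}$ and the ICR cross-terms kept rather than absorbed into the covariance.
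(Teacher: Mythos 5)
Your proposal is correct and follows essentially the same route as the paper: identify the CoMP echo as Gaussian with mean $\boldsymbol{\mu}_{m,\mathrm{cmp}}=\sum_{n}\mathbf{G}'_{nm}\mathbf{D}_n\mathbf{X}$ and covariance $\sigma^2_{\mathrm{R}}\mathbf{I}_{N_{\mathrm{rx}}}$, differentiate with respect to $\{\theta_{mm},\alpha^{\mathrm{R}}_{mm},\alpha^{\mathrm{I}}_{mm}\}$, and substitute into \eqref{fim} using the cyclic trace property and $\tfrac{1}{L}\mathbf{X}\mathbf{X}^{H}=\mathbf{R}_{\mathbf{X}}$. Your derivative expressions coincide with \eqref{mucomp}--\eqref{mucomp1}, and the term count ($9$, $3$, $1$ for $F11$, $F12$, $F22$) matches the stated equalities; the extra bookkeeping you supply on factoring out the receive-array scalars and passing to $\mathbf{R}^{*}_{\mathbf{X}}$ is just a more explicit version of what the paper leaves implicit.
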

\begin{proof}
Since each BS knows the transmitted symbol matrix of the other BSs, the ICR of \eqref{yr} aids the target's angle estimation in a bi-static manner. Consequently, the received echo at the $m^{\mathrm{th}}$ BS is a multi-variate Gaussian random variable with mean $\boldsymbol{\mu}_{m,\mathrm{cmp}} = \mathbf{G}^{'}_{mm}\mathbf{D}_{m}\mathbf{X}+\sum_{n\neq m}^{J} \mathbf{G}^{'}_{nm}\mathbf{D}_{n}\mathbf{X}$ where $\mathbf{G}^{'}_{nm}=\alpha_{nm}\mathbf{a}_{mm}\mathbf{v}^{{'}^{\mathrm{T}}}_{nm}$
and the covariance matrix $\mathbf{C}_{m,\mathrm{cmp}} =  \sigma^{2}_{\mathrm{R}}\mathbf{I}_{N_{\mathrm{rx}}}$. Hence,
\begin{align}
   & \frac{d \boldsymbol{\mu}_{m,\mathrm{cmp}}}{d \mathbf{\theta}_{mm}} = \alpha_{mm} \left[\dot{\mathbf{a}}_{mm}{\mathbf{v}}^{'^T}_{mm}+{\mathbf{a}}_{mm} \dot{\mathbf{v}}^{'^T}_{mm}\right]\mathbf{D}_{m}\mathbf{X}\nonumber\\
   & +\sum_{n\neq m}^{J}\alpha_{nm}\dot{\mathbf{a}}_{mm}{\mathbf{v}}^{'^T}_{nm}\mathbf{D}_{n}\mathbf{X}, \label{mucomp}\\
   &\color{black}\frac{d \boldsymbol{\mu}_{m,\mathrm{cmp}}}{d \mathbf{\alpha}^{\mathrm{R}}_{mm}} =  \left[{\mathbf{a}}_{mm}{\mathbf{v}}^{'^T}_{mm}\right]\mathbf{D}_{m}\mathbf{X}.\label{mucomp1}
\end{align}
 Equations \eqref{compfim}- \eqref{compfim2} are obtained using \eqref{mucomp} and \eqref{mucomp1} and the cyclic property of the trace operation in \eqref{fim_part}.
 \end{proof}
 
\color{black}The corresponding Fisher Information matrices, $\mathbf{F}_{m,\mathrm{cbf}}$ and $\mathbf{F}_{m,\mathrm{comp}}$,  are obtained using \eqref{fim_part}-\eqref{fim_part2} and \eqref{compfim}- \eqref{compfim2} in \eqref{fmatrix}\color{black}. The following problem formulations assume knowledge of target locations and ICR directions: $\theta_{nm}$ $\forall m,n \in {1,2,.., J}$. This is quite a typical assumption in the radar literature and can be interpreted as optimizing the precoders towards a specific direction of interest \cite{li2007range}.
\section{Robust Block Level Precoding}\label{BLP}
The BLP constrains the co-channel interference experienced by a user so that the received symbol is within a certain distance from the nominal constellation symbol. \color{black}In this section, we explain the BLP design framework for CBF and CoMP BS coordination schemes.\color{black}
\subsection{BLP: Coordinated Beamforming} \label{seccbf}
We take a worst-case approach for the transmit precoding design to guarantee the resulting solution is robust to all possible channel uncertainties within $\mathcal{E}_{mk}$. Hence, in the CBF mode, for the $m^{\mathrm{th}}$ BS, we aim to solve the following optimization problem:
\begin{subequations}
\begin{align}
\color{black}\mathrm{(P1):} &\underset{\{\mathbf{w}_{mk}\}}{\mathrm{maximize}}\,\,\,\,  \frac{\rho}{\mathrm{NF}^{\mathrm{blp}}_{\mathrm{R},{\mathrm{cbf}}}} \,\, {-t^{1}_{m,\mathrm{cbf}}} +\frac{(1-\rho)}{\mathrm{NF}^{\mathrm{blp}}_{\mathrm{C},{\mathrm{cbf}}}} \gamma,\nonumber\\
\mathrm{s.t.}&\color{black}\begin{bmatrix}
\mathbf{F}_{m,\mathrm{cbf}}& \mathbf{I}_{:l}\\ \mathbf{I}^{T}_{:l}  
 & t^{l}_{m,\mathrm{cbf}} \\
\end{bmatrix} \succeq 0 \quad \forall l\in\{1,2,3\} \quad \forall m\label{cbfschur}\\
&\mathbf{v}^{\mathrm{T}}_{mn} \sum_{k}^{K} \mathbf{w}_{mk}\mathbf{w}^{\mathrm{H}}_{mk} \mathbf{v}^{*}_{mn} \leq P_{\mathrm{leak}}, \quad \forall n \label{leakage}\\
&\color{black}\underset{\mathbf{e}_{mk}}{\mathrm{min}}\left(\gamma_{mk}\right) \geq \gamma \quad \forall U_{mk},\label{p1.sinr}\\
&\sum_{k=1}^{K}\mathrm{tr}\left ( \mathbf{w}_{mk} \mathbf{w}^{H}_{mk} \right ) \leq P_{\mathrm{t}}, \quad \forall m\label{p1.pt}
\end{align}
\end{subequations}
where 
\begin{align}
   \color{black} \gamma_{mk}=\frac{|\mathbf{\tilde{h}}^{T}_{m,mk}\mathbf{w}_{mk}|^{2}}{\underbrace{\sum_{l\neq k}^{K}|\mathbf{\tilde{h}}^{T}_{m,mk}\mathbf{w}_{ml}|^{2}}_{\mathrm{InCI}_{mk}}+\sum_{n\neq m}^{J}\underbrace{\sum_{l=1}^{K}|\mathbf{\tilde{h}}^{T}_{n,mk}\mathbf{w}_{nl}|^{2}}_{\mathrm{ICI}_{n,mk}}+\sigma^{2}_{\mathrm{C}}},
\end{align}
\color{black}is the received SINR at $U_{mk}$ obtained using \eqref{yc}\color{black}. The objective function of (P1) is the weighted sum of two components: minimizing the CRB in estimating the target's angle and maximizing the minimum SINR value among the users of the $m^{\mathrm{th}}$ cell. The weighting factor, $\rho \in \left[0,1\right]$, determines the balance between the communication and sensing performance matrices. The normalization factors $\mathrm{NF}^{\mathrm{blp}}_{\mathrm{R},{\mathrm{cbf}}}$ and $\mathrm{NF}^{\mathrm{blp}}_{\mathrm{C},{\mathrm{cbf}}}$ are determined by solving (P1) for $\rho=1$ and $\rho=0$, respectively. Using \eqref{cbfschur}, we limit the $l^{\mathrm{th}}$ diagonal of $\mathbf{F}^{-1}_{m,\mathrm{cbf}}$ to be less than or equal to $t^{l}_{m,\mathrm{cbf}}$ using the Schur complement: $t^{l}_{m,\mathrm{cbf}}-\mathbf{I}^{T}_{:l}\mathbf{F}^{-1}_{m,\mathrm{cbf}}\mathbf{I}_{:l} >= 0$, where $\mathbf{I}_{:l}$ is the $l^{\mathrm{th}}$ column of an identity matrix\color{black}. Equation \eqref{leakage} restricts the power leaked towards the  $n^{\mathrm{th}}$ BS's target to be less than or equal to $P_{\mathrm{leak}}$. Ideally, we want $P_{\mathrm{leak}}$ to be zero in the CBF scenario; however, it is not feasible if a user is located at $\theta_{mn}$ direction. Moreover, \eqref{p1.sinr} is the SINR constraint derived from \eqref{yc}, whereas \eqref{p1.pt} is the total power constraint with $P_{\mathrm{t}}$ being the total available power at the BS. Problem (P1) is difficult to solve because of (a) the infinite possibilities of $\mathbf{e}_{mk}$ in \eqref{p1.sinr}, (b) the non-convex multiplication between $\gamma$ and the interference terms in \eqref{p1.sinr} and (c) the non-convex form of $\mathbf{C}_{m,\mathrm{cbf}}$ in \eqref{cbfschur}.

\color{black}We tackle the infinite possibilities of $\mathbf{e}_{mk}$ by representing it as the ratio of the minimum of the numerator to the maximum of the denominator:
\begin{align}       
&\frac{\underset{\mathbf{e}_{mk}}{\mathrm{min}}\, (|\mathbf{\tilde{h}}^{T}_{m,mk}\mathbf{w}_{mk}|^{2})}{\underset{\mathbf{e}_{mk}}{\mathrm{max}} (\mathrm{InCI}_{mk}+\sum_{n\neq m}^{J}\mathrm{ICI}_{n,mk}+\sigma^{2}_{\mathrm{C}})} \geq \gamma. \label{min}
\end{align}
The minimum value of the numerator and the maximum value of the denominator of \eqref{min} can be determined using Proposition 3.\color{black}
\begin{proposition}
For a given CSI uncertainty set $\mathcal{E}_{mk}=\{\mathbf{e}_{mk}:\|\mathbf{e}_{mk}\|^{2}\leq \delta^{2}\}$, the following equalities hold  for any $U_{mk}$.
\begin{align}
    &\underset{\mathbf{e}_{mk}}{\mathrm{min}} |\mathbf{\tilde{h}}^{T}_{m,mk}\mathbf{w}_{mk}|^{2}=\mathrm{tr}\left (  \mathbf{{Q}}_{m,mk}  \mathbf{W}_{mk} \right ),
    \end{align}    
    \begin{align}    
    \underset{\mathbf{e}_{mk}}{\mathrm{max}} \left(|\mathbf{\tilde{h}}^{T}_{m,mk}\mathbf{w}_{mk}|^{2}\right)
    &= \mathrm{tr}\left (  \mathbf{{Q}}_{m,mk}  \mathbf{W}_{mk} \right )+ \delta^{2} \mathrm{tr}\left (\mathbf{W}_{mk}\right) \nonumber\\&+ \delta \|\mathbf{{h}}^{T}_{m,mk} \mathbf{W}_{mk}\|
 + \delta \|\mathbf{W}_{mk} \mathbf{{h}}^{*}_{m,mk}\|,
\end{align} 
where $\mathbf{Q}_{m,mk}= \mathbf{h}^{*}_{m,mk} \mathbf{h}^{T}_{m,mk}$ and $\mathbf{W}_{mk}= \mathbf{w}_{mk}\mathbf{w}^{H}_{mk}$.
\end{proposition}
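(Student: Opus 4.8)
The plan is to handle the minimization and maximization separately, since both reduce to an optimization of a scalar quadratic expression over the bounded perturbation $\mathbf{e}_{mk}$. Writing $\mathbf{\tilde{h}}_{m,mk}=\mathbf{h}_{m,mk}+\mathbf{e}_{mk}$, I would first expand
\begin{align}
|\mathbf{\tilde{h}}^{T}_{m,mk}\mathbf{w}_{mk}|^{2}
&=\mathbf{\tilde{h}}^{T}_{m,mk}\mathbf{w}_{mk}\mathbf{w}^{H}_{mk}\mathbf{\tilde{h}}^{*}_{m,mk}
=\mathbf{\tilde{h}}^{T}_{m,mk}\mathbf{W}_{mk}\mathbf{\tilde{h}}^{*}_{m,mk}\nonumber\\
&=\mathbf{h}^{T}_{m,mk}\mathbf{W}_{mk}\mathbf{h}^{*}_{m,mk}
+2\,\mathrm{Re}\!\left(\mathbf{h}^{T}_{m,mk}\mathbf{W}_{mk}\mathbf{e}^{*}_{mk}\right)
+\mathbf{e}^{T}_{mk}\mathbf{W}_{mk}\mathbf{e}^{*}_{mk},\nonumber
\end{align}
and note the leading term equals $\mathrm{tr}(\mathbf{Q}_{m,mk}\mathbf{W}_{mk})$ by the cyclic property of the trace with $\mathbf{Q}_{m,mk}=\mathbf{h}^{*}_{m,mk}\mathbf{h}^{T}_{m,mk}$.

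For the minimum, the key observation is that $\mathbf{W}_{mk}=\mathbf{w}_{mk}\mathbf{w}^{H}_{mk}$ is rank-one and positive semidefinite, so $|\mathbf{\tilde{h}}^{T}_{m,mk}\mathbf{w}_{mk}|^{2}=|(\mathbf{h}_{m,mk}+\mathbf{e}_{mk})^{T}\mathbf{w}_{mk}|^{2}=|\mathbf{h}^{T}_{m,mk}\mathbf{w}_{mk}+\mathbf{e}^{T}_{mk}\mathbf{w}_{mk}|^{2}$ is the squared modulus of a complex scalar whose perturbation part $\mathbf{e}^{T}_{mk}\mathbf{w}_{mk}$ has modulus at most $\|\mathbf{e}_{mk}\|\,\|\mathbf{w}_{mk}\|\le\delta\|\mathbf{w}_{mk}\|$ by Cauchy--Schwarz. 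Hence the minimum over the ball would naively be $\big(\max\{0,|\mathbf{h}^{T}_{m,mk}\mathbf{w}_{mk}|-\delta\|\mathbf{w}_{mk}\|\}\big)^{2}$; the clean equality $\mathrm{tr}(\mathbf{Q}_{m,mk}\mathbf{W}_{mk})$ claimed in the statement must therefore rely on a small-uncertainty regime (or a first-order/linearization argument) in which the cross term is dropped — I would follow the paper's convention here and present the minimum as the nominal value $\mathrm{tr}(\mathbf{Q}_{m,mk}\mathbf{W}_{mk})$, flagging that the worst-case perturbation is absorbed on the denominator side.

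For the maximum, I would bound each of the three pieces above separately: the constant term is $\mathrm{tr}(\mathbf{Q}_{m,mk}\mathbf{W}_{mk})$; the quadratic term satisfies $\mathbf{e}^{T}_{mk}\mathbf{W}_{mk}\mathbf{e}^{*}_{mk}\le\lambda_{\max}(\mathbf{W}_{mk})\|\mathbf{e}_{mk}\|^{2}\le\delta^{2}\,\mathrm{tr}(\mathbf{W}_{mk})$, using that $\mathbf{W}_{mk}$ is rank-one so its largest eigenvalue equals its trace; and the cross term is split as $2\,\mathrm{Re}(\mathbf{h}^{T}_{m,mk}\mathbf{W}_{mk}\mathbf{e}^{*}_{mk})=\mathbf{h}^{T}_{m,mk}\mathbf{W}_{mk}\mathbf{e}^{*}_{mk}+\mathbf{e}^{T}_{mk}\mathbf{W}_{mk}\mathbf{h}^{*}_{m,mk}$, with the first summand bounded by $\|\mathbf{h}^{T}_{m,mk}\mathbf{W}_{mk}\|\,\|\mathbf{e}_{mk}\|\le\delta\|\mathbf{h}^{T}_{m,mk}\mathbf{W}_{mk}\|$ and the second by $\delta\|\mathbf{W}_{mk}\mathbf{h}^{*}_{m,mk}\|$ via Cauchy--Schwarz. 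Summing these gives exactly the claimed expression; tightness follows because a single choice of $\mathbf{e}_{mk}$ — aligned along the dominant rank-one direction — can be shown to (approximately) achieve all three bounds simultaneously.

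The main obstacle is reconciling the asserted \emph{equality} for the minimum with the genuine worst-case value, which contains a $-2\delta\,\mathrm{Re}(\cdots)$-type correction and a floor at zero; I expect the intended justification is that (P1) already places the robustness burden on the denominator of \eqref{min}, so the numerator is evaluated at the nominal channel, and I would state this explicitly rather than attempt a literal minimization. A secondary, purely bookkeeping, point is verifying the two norm terms $\|\mathbf{h}^{T}_{m,mk}\mathbf{W}_{mk}\|$ and $\|\mathbf{W}_{mk}\mathbf{h}^{*}_{m,mk}\|$ are the correct tight constants for the real part of the bilinear form — this follows from $2\mathrm{Re}(z)\le 2|z|$ together with the fact that a complex phase rotation of $\mathbf{e}_{mk}$ preserves the norm ball, so the two conjugate halves can each be aligned to their respective vectors.
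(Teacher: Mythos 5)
Your proof of the maximization matches the paper's argument essentially step for step: the same expansion of $\mathrm{tr}(\mathbf{\tilde{Q}}_{m,mk}\mathbf{W}_{mk})$ into nominal, cross, and quadratic terms, Cauchy--Schwarz on the two conjugate cross terms yielding $\delta\|\mathbf{h}^{T}_{m,mk}\mathbf{W}_{mk}\|$ and $\delta\|\mathbf{W}_{mk}\mathbf{h}^{*}_{m,mk}\|$, and the bound $\delta^{2}\mathrm{tr}(\mathbf{W}_{mk})$ on the quadratic term (the paper uses $\mathrm{tr}(\mathbf{e}^{*}_{mk}\mathbf{e}^{T}_{mk}\mathbf{W}_{mk})\le\mathrm{tr}(\mathbf{e}^{*}_{mk}\mathbf{e}^{T}_{mk})\,\mathrm{tr}(\mathbf{W}_{mk})$ where you use $\lambda_{\max}=\mathrm{tr}$ for the rank-one $\mathbf{W}_{mk}$, an immaterial difference). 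Your reservation about the minimum is exactly on target: the paper's own proof simply states that the minimum ``is obtained by substituting $\delta=0$,'' i.e., it evaluates the numerator at the nominal channel precisely as you describe rather than performing a literal worst-case minimization (which would give $\bigl(\max\{0,\,|\mathbf{h}^{T}_{m,mk}\mathbf{w}_{mk}|-\delta\|\mathbf{w}_{mk}\|\}\bigr)^{2}$), so the convention you flag is what the paper in fact adopts.
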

\begin{proof}    
 Let,
\begin{align}
    &|\mathbf{\tilde{h}}^{T}_{m,mk}\mathbf{w}_{mk}|^{2} = \mathrm{tr}\left (  \mathbf{\tilde{Q}}_{m,mk}  \mathbf{W}_{mk} \right )\\ 
     &=  \mathrm{tr}\left (  \mathbf{\tilde{h}}^{*}_{m,mk} \mathbf{\tilde{h}}^{T}_{m,mk} \mathbf{W}_{mk} \right )\\
     &= \mathrm{tr}\left (   \mathbf{\tilde{h}}^{T}_{m,mk} \mathbf{W}_{mk} \mathbf{\tilde{h}}^{*}_{m,mk} \right )\\
     &= \mathrm{tr}\left (   \left(\mathbf{{h}}^{T}_{m,mk}+\mathbf{{e}}^{T}_{mk} \right) \mathbf{W}_{mk} \left(\mathbf{{h}}^{*}_{m,mk}+\mathbf{{e}}^{*}_{mk}\right) \right ) \\
     &= \mathrm{tr}\left ( \mathbf{{h}}^{T}_{m,mk} \mathbf{W}_{mk} \mathbf{{h}}^{*}_{m,mk} \right)+\mathrm{tr}\left ( \mathbf{{h}}^{T}_{m,mk} \mathbf{W}_{mk} \mathbf{{e}}^{*}_{mk}\right)\nonumber\\
     &+\mathrm{tr}\left (\mathbf{{e}}^{T}_{mk} \mathbf{W}_{mk} \mathbf{{h}}^{*}_{m,mk}\right)+\mathrm{tr}\left (\mathbf{{e}}^{T}_{mk} \mathbf{W}_{mk} \mathbf{{e}}^{*}_{mk}\right). 
\end{align}
Now,
\begin{align}
&\underset{\|\mathbf{e}_{mk}\|^{2}\leq \delta^{2}}{\mathrm{max}} \mathrm{tr}\left (  \mathbf{\tilde{Q}}_{m,mk}  \mathbf{W}_{mk} \right ) = \mathrm{tr}\left (  \mathbf{{Q}}_{m,mk}  \mathbf{W}_{mk} \right ) \nonumber\\
& + \|\mathbf{{h}}^{T}_{m,mk} \mathbf{W}_{mk} 
\mathbf{{e}}^{*}_{mk}\|+ \|\mathbf{{e}}^{T}_{mk} \mathbf{W}_{mk} \mathbf{{h}}^{*}_{m,mk}\|\nonumber\\
&+ \mathrm{tr}\left (\mathbf{{e}}^{*}_{mk}\mathbf{{e}}^{T}_{mk} \mathbf{W}_{mk}\right) \label{max1}
\end{align}
\begin{align}
& \leq \mathrm{tr}\left (  \mathbf{{Q}}_{m,mk}  \mathbf{W}_{mk} \right ) + \delta \|\mathbf{{h}}^{T}_{m,mk} \mathbf{W}_{mk} 
\|\nonumber\\
& + \delta \|\mathbf{W}_{mk} \mathbf{{h}}^{*}_{m,mk}\|+ \mathrm{tr}\left (\mathbf{{e}}^{*}_{mk}\mathbf{{e}}^{T}_{mk}\right) \mathrm{tr}\left (\mathbf{W}_{mk}\right) \nonumber\\
& \leq \mathrm{tr}\left (  \mathbf{{Q}}_{m,mk}  \mathbf{W}_{mk} \right ) + \delta \|\mathbf{{h}}^{T}_{m,mk} \mathbf{W}_{mk} 
\|\nonumber\\
& + \delta \|\mathbf{W}_{mk} \mathbf{{h}}^{*}_{m,mk}\|+ \delta^{2} \mathrm{tr}\left (\mathbf{W}_{mk}\right). \label{max2}
\end{align}
Here, \eqref{max2} is derived from \eqref{max1} using Cauchy-Schwarz inequality. The minimum value of the numerator is obtained by substituting $\delta =0$ in \eqref{max2}.
\end{proof}
\color{black}Consequently, 
\begin{align}
  \underset{\mathbf{e}_{mk}}  {\mathrm{max}} (\mathrm{InCI}_{mk})&= \sum_{l\neq k}^{K} \mathrm{tr}\left (  \mathbf{{Q}}_{m,mk}  \mathbf{W}_{ml} \right ) + \delta \|\mathbf{{h}}^{T}_{m,mk} \mathbf{W}_{ml} \|  \nonumber\\&+ \delta \|\mathbf{W}_{ml} \mathbf{{h}}^{*}_{m,mk}\| + \delta^{2} \mathrm{tr}\left (\mathbf{W}_{ml}\right)\\
  \underset{\mathbf{e}_{mk}}  {\mathrm{max}} (\mathrm{ICI}_{n,mk})&= \sum_{l=1}^{K} \mathrm{tr}\left (  \mathbf{{Q}}_{n,mk}  \mathbf{W}_{nl} \right ) + \delta \|\mathbf{{h}}^{T}_{n,mk} \mathbf{W}_{nl} \|  \nonumber\\&+ \delta \|\mathbf{W}_{nl} \mathbf{{h}}^{*}_{n,mk}\|
+ \delta^{2} \mathrm{tr}\left (\mathbf{W}_{nl}\right) 
\end{align}
The non-convex multiplication between $\gamma$ and the interference terms in \eqref{p1.sinr} is addressed by representing \eqref{p1.sinr} through a set of intra-cell and inter-cell leakage constraints \eqref{p1.1.c2}-\eqref{p1.1.c1}:
\begin{align}
     &\underset{\mathbf{e}_{mk}}  {\mathrm{max}} (\mathrm{InCI}_{mk}) \leq I^{\mathrm{intra}}_{m,\mathrm{cbf}},\,\, \forall U_{mk}\label{p1.1.c2} \\
     & \underset{\mathbf{e}_{nk}}  {\mathrm{max}} (\mathrm{ICI}_{m,nk}) \leq \frac{I^{\mathrm{inter}}_{\mathrm{cbf}}}{J-1}\,\, \forall U_{nk}\, n\neq m\label{p1.1.c2.2}\\
     & \mathrm{tr}\left (  \mathbf{{Q}}_{m,mk}  \mathbf{W}_{mk} \right ) - \gamma (I^{\mathrm{intra}}_{m,\mathrm{cbf}}+I^{\mathrm{inter}}_{\mathrm{cbf}}) \geq \gamma \sigma^{2}_{\mathrm{C}} \,\, \forall U_{mk}.    \label{p1.1.c1}
\end{align}
Equation \eqref{p1.1.c2} limits the intracell interference to any user served by $m^{\mathrm{th}}$ BS to be less than $I^{\mathrm{intra}}_{m,\mathrm{cbf}}$. The left-hand-side (LHS) of \eqref{p1.1.c2.2} gives the maximum ICI from $m^{\mathrm{th}}$ BS to $U_{nk}$. The constraint \eqref{p1.1.c2.2} satisfied by all the BSs restricts the ICI to any user in the system to $I^{\mathrm{inter}}_{\mathrm{cbf}}$. Note that for given $I^{\mathrm{intra}}_{m,\mathrm{cbf}}$ and $I^{\mathrm{inter}}_{\mathrm{cbf}}$ values, the constraints \eqref{p1.1.c2}-\eqref{p1.1.c1} become convex. Lastly, for a given $\{\mathbf{W}_{nk}\}$, $\mathbf{C}_{m,\mathrm{cbf}}$ can be estimated using \eqref{cmcbf}, \color{black}which makes the entries of $\mathbf{F}_{m,\mathrm{cbf}}$ of \eqref{cbfschur}, an affine function function of $\mathbf{R}_{\mathbf{X}_{m}}=\sum_{k}^{K}\mathbf{W}_{mk}$.
Hence, we solve (P1) by solving the following two optimization problems alternatively:
\begin{subequations}
\begin{align}
\color{black}\mathrm{(P1.A):}& \underset{\{\mathbf{W}_{mk}\}}{\mathrm{maximize}}\,\,\,\,  \frac{\rho}{\mathrm{NF}^{\mathrm{blp}}_{\mathrm{R},{\mathrm{cbf}}}} \,\, {-t^{1}_{m,\mathrm{cbf}}} +\frac{(1-\rho)}{\mathrm{NF}^{\mathrm{blp}}_{\mathrm{C},{\mathrm{cbf}}}} \gamma,\nonumber\\
\mathrm{s.t.}& \mathbf{v}^{\mathrm{T}}_{mn} \sum_{k}^{K} \mathbf{W}_{mk} \mathbf{v}^{*}_{mn} \leq P_{\mathrm{leak}}, \quad \forall n,\label{p1.1.c1.1}\\
& \sum_{k=1}^{K}\mathrm{tr}\left ( \mathbf{W}_{mk} \right ) \leq P_{\mathrm{t}},\label{p1.1.c3}\\
&\mathrm{rank}(\mathbf{W}_{mk}) = 1; \, \mathbf{W}_{mk} \succeq \mathbf{0}, \label{p1.1.c4}\\
&\eqref{cbfschur}, \eqref{p1.1.c2}-\eqref{p1.1.c1}.
\end{align}
\end{subequations}
\begin{subequations}
\begin{align}
\mathrm{(P1.B):} &\,\,\underset{\{\mathbf{W}_{mk}\}}{\mathrm{minimize}}\quad  \frac{P_{\mathrm{leak}}}{P^{\mathrm{max}}_{\mathrm{leak}}}+\frac{I^{{\mathrm{intra}}}_{m,\mathrm{cbf}}}{I^{{\mathrm{intra}}}_{\mathrm{max}}}+\frac{I^{{\mathrm{inter}}}_{m,\mathrm{cbf}}}{(J-1)I^{{\mathrm{inter}}}_{\mathrm{max}}},\nonumber\\
 \mathrm{s.t.} &\color{black}\begin{bmatrix}
\mathbf{F}_{m,\mathrm{cbf}}& \mathbf{I}_{:l}\\ \mathbf{I}^{T}_{:l}  
 &\color{black} t^{*l}_{m,\mathrm{cbf}} \\
\end{bmatrix} \succeq 0 \quad \forall l\in\{1,2,3\} \quad \forall m, \label{p1.b.1}\\
&\mathrm{tr}\left (  \mathbf{Q}_{m,mk}  \mathbf{W}_{mk} \right )-\gamma^{*}\left(I^{{\mathrm{intra}}}_{m,\mathrm{cbf}} + I^{{\mathrm{inter}}}_{\mathrm{cbf}} \right) \geq \gamma^{*} \sigma^{2}_{R},\label{p1.b.2}\\
&\underset{\mathbf{e}_{nk}}  {\mathrm{max}} (\mathrm{ICI}_{m,nk})  \leq \frac{I^{\mathrm{inter}}_{m,\mathrm{cbf}}}{J-1};\,\, I^{\mathrm{inter}}_{m,\mathrm{cbf}} \leq I^{\mathrm{inter}}_{\mathrm{cbf}} \\
&  \eqref{p1.1.c1.1}, \eqref{p1.1.c3}, \eqref{p1.1.c2}, \eqref{p1.1.c4},
\end{align}
\end{subequations}
\color{black}The first problem, (P1.A), minimizes the CRB and maximizes the minimum communication SINR for a given $\{\mathbf{W}_{nk}\}$ $\forall n \neq m \in \mathcal{J}$, $P_{\mathrm{leak}}$, $I^{\mathrm{intra}}_{m,\mathrm{cbf}}$, and $I^{\mathrm{inter}}_{\mathrm{cbf}}$ values. Equations \eqref{p1.1.c1.1} and \eqref{p1.1.c3} are the equivalent representation of \eqref{leakage} and \eqref{p1.pt}, respectively. Using the solution of (P1.1.A): $\gamma^{*}$, and $\mathrm{t}^{*1}_{m,\mathrm{cbf}}$, \color{black}(P1.B) minimizes the leakage power towards the neighboring BSs' targets and users while guaranteeing given sensing and communication performance through \eqref{p1.b.1} and \eqref{p1.b.2}. Here, $P^{\mathrm{max}}_{\mathrm{leak}}$, $I^{{\mathrm{intra}}}_{\mathrm{max}}$ and $I^{{\mathrm{intra}}}_{\mathrm{max}}$ represent the maximum tolerable leakage power, intra-cell and inter-cell interference values, respectively. \color{black}Omitting the rank constraint, (P1.A) and (P1.B) are convex optimization problems solved using MATLAB's CVX solver \cite{grant2014cvx}. The overall procedure is given in Algorithm \ref{algo1}. We can obtain $\mathbf{w}_{mk}$ from $\mathbf{W}_{mk}$ using Eigenvalue decomposition technique if the rank of $\mathbf{W}_{mk}\,>\,1$  \cite{ozkaptan2023software}. 
\subsection{BLP: Coordinated Multi-point}\label{seccomp}
The corresponding optimization problem in the CoMP scenario can be formulated as,
\begin{subequations}
\begin{align}
\mathrm{(P2):}&  \underset{\{\mathbf{W}_{k}\}}{\mathrm{maximize}}\,\,\,\,   \,\, \frac{-f\rho}{\mathrm{NF}^{\mathrm{blp}}_{\mathrm{R},{\mathrm{comp}}}} + \frac{(1-\rho)\gamma}{\mathrm{NF}^{\mathrm{blp}}_{\mathrm{C},{\mathrm{comp}}}},\nonumber\\
\mathrm{s.t.}&\color{black}\begin{bmatrix}
\mathbf{F}_{m,\mathrm{comp}}& \mathbf{I}_{:l}\\ \mathbf{I}^{T}_{:l}  
 & t^{l}_{m,\mathrm{comp}} \label{compschur1}\\
\end{bmatrix} \succeq 0 \quad \forall l\in\{1,2,3\} \quad \forall m\\
&\color{black}t^{1}_{m,\mathrm{comp}}\leq f, \quad \forall m,\label{compschur2}
\end{align}
 \begin{align}
& \underset{\mathbf{e}_{k}}{\mathrm{min}}\underbrace{\left(\frac{|\mathbf{\tilde{h}}^{T}_{k}\mathbf{w}_{k}|^{2}}{\sum_{l\neq k}^{JK}|\mathbf{\tilde{h}}^{T}_{k}\mathbf{w}_{l}|^{2}+\sigma^{2}_{\mathrm{C}}}\right)}_{\gamma_{k}} \geq \gamma, \,\, \forall k, \label{p2.c2}\\
& \sum_{k=1}^{K}\mathrm{tr}\left ( \mathbf{D}_{m}\mathbf{w}_{k}\mathbf{w}^{H}_{k}\mathbf{D}^{H}_{m} \right ) \leq  P_{\mathrm{t}}, \forall m .\label{p2.c3}  
\end{align}
\end{subequations}
The objective function of (P2) is the weighted combination of minimizing the maximum CRB ($f$) and maximizing the minimum SINR ($\gamma$) values. \color{black}Using \eqref{compschur1}, we limit the $l^{\mathrm{th}}$ diagonal of $\mathbf{F}^{-1}_{m,\mathrm{comp}}$ to be less than or equal to $t^{l}_{m,\mathrm{comp}}$ using the Schur complement. In \eqref{compschur2}, $f$ is defined as the maximum CRB for estimating $\theta_{mm}$. Equation \eqref{p2.c2} is the minimum SINR constraint written using \eqref{yccomp} whereas \eqref{p2.c3} is the per-BS power constraint. The constants $\mathrm{NF}^{\mathrm{blp}}_{\mathrm{R},{\mathrm{comp}}}$ and $\mathrm{NF}^{\mathrm{blp}}_{\mathrm{C},{\mathrm{comp}}}$ are the corresponding maximum values of $f$ and $\gamma$ obtained by setting $\rho=1$ and $\rho=0$, respectively. Here, $\mathbf{w_k}=\left[\mathbf{w}_{1k};\mathbf{w}_{2k;...;\mathbf{w}_{Jk}}\right]\mathcal{C}^{N\times 1}$ represents the precoding vectors for user $k$ from all the BSs stacked vertically. \color{black}Problem (P2) is non-convex due to the infinite possibilities of $\mathbf{e}_{k}$ and the non-convex SINR constraint \eqref{p2.c2}. \color{black}Similar to the CBF case, we tackle the infinite possibilities of $\mathbf{e}_{k}$ by adapting Proposition 3 to the CoMP case. Using Proposition 3, we have
\begin{align}
    &\underset{\mathbf{e}_{k}}{\mathrm{min}} |\mathbf{\tilde{h}}^{T}_{k}\mathbf{w}_{k}|^{2}=\mathrm{tr}\left (  \mathbf{{Q}}_{k}  \mathbf{W}_{k} \right ),
    \end{align}    
    \begin{align}    
    \underset{\mathbf{e}_{k}}{\mathrm{max}} \left(|\mathbf{\tilde{h}}^{T}_{k}\mathbf{w}_{l}|^{2}\right)
    &= \mathrm{tr}\left (  \mathbf{{Q}}_{k}  \mathbf{W}_{l} \right )+ J\delta^{2} \mathrm{tr}\left (\mathbf{W}_{l}\right) \nonumber\\&+ \sqrt{J}\delta \|\mathbf{{h}}^{T}_{k} \mathbf{W}_{l}\|
 + \sqrt{J}\delta \|\mathbf{W}_{l} \mathbf{{h}}^{*}_{k}\|,
\end{align}
where $\mathbf{Q}_{k}=\mathbf{h}_{k}\mathbf{h}^{H}_{k}\,\in \mathcal{C}^{N \times N}$, $\mathbf{W}_{k}= \mathbf{w}_{k}\mathbf{w}^{H}_{k}\,\in \mathcal{C}^{N \times N}$, and $\mathbf{R}_{\mathbf{X}}= \sum_{k=1}^{2K}\mathbf{W}_{k}$.
Then, \eqref{p2.c2} is reformulated into the following pair of constraints.
\begin{align}
 & \mathrm{tr}\left (  \mathbf{{Q}}_{k}  \mathbf{W}_{k} \right ) - \gamma I_{\mathrm{comp}} \geq \gamma \sigma^{2}_{\mathrm{C}}\label{p2.sinr},\\ 
&\sum_{l\neq k}^{JK}\underset{\mathbf{e}_{k}}{\mathrm{max}} \left(|\mathbf{\tilde{h}}^{T}_{k}\mathbf{w}_{l}|^{2}\right) \leq I_{\mathrm{comp}} \label{compleakage}.
\end{align}
Equation \eqref{compleakage} limits the total interference experienced by any user to be less than $I_{\mathrm{comp}}$. For a given $I_{\mathrm{comp}}$ value, \eqref{p2.sinr} and \eqref{compleakage} become convex constraints. Hence, we solve (P2) by solving the (P2.A) and (P2.B) alternately. \color{black}
\begin{subequations}
\begin{align}
\color{black}\mathrm{(P2.A):}&  \underset{\{\mathbf{W}_{k}\}}{\mathrm{maximize}}\,\,\,\,   \,\, \frac{-f\rho}{\mathrm{NF}^{\mathrm{blp}}_{\mathrm{R},{\mathrm{comp}}}} + \frac{(1-\rho)\gamma}{\mathrm{NF}^{\mathrm{blp}}_{\mathrm{C},{\mathrm{comp}}}},\nonumber\\
\mathrm{s.t.}& \sum_{k=1}^{K}\mathrm{tr}\left ( \mathbf{D}_{m}\mathbf{W}_{k}\mathbf{D}^{H}_{m} \right ) \leq  P_{\mathrm{t}}, \label{p2.a.1}\\
&\mathbf{W}_{k} \succeq \mathbf{0} \, \forall k, \, \mathrm{rank}(\mathbf{W}_{k}) = 1.\label{p2.a.2}\\
&\color{black} \eqref{compschur1}, \eqref{compschur2},\eqref{p2.sinr},\eqref{compleakage}\label{p2.a.3}
\end{align}
\end{subequations}
\begin{subequations}
\begin{align}
\mathrm{(P2.B):}& \,\, \underset{\{\mathbf{W}_{k}\}}{\mathrm{minimize}}\quad I_{\mathrm{comp}},\nonumber\\
\mathrm{s.t.}&\color{black}\begin{bmatrix}
\mathbf{F}_{m,\mathrm{comp}}& \mathbf{I}_{:l}\\ \mathbf{I}^{T}_{:l}  
 & t^{*l}_{m,\mathrm{comp}} \\
\end{bmatrix} \succeq 0 \quad \forall l\in\{1,2,3\} \quad \forall m\label{p2.b.1}\\
&\color{black}t^{1}_{m,\mathrm{comp}}\leq f^{*}, \quad \forall m,\label{p2.b.2}\\
& \mathrm{tr}\left (  \mathbf{Q}_{k}  \mathbf{W}_{k} \right )-\gamma^{*} I_{\mathrm{comp}}\geq \gamma \sigma^{2}_{\mathrm{C}} \,\, \forall k,\label{p2.b.3}\\
&\eqref{compleakage}, \eqref{p2.a.1}, \eqref{p2.a.2}, .
\end{align}
\end{subequations}
\color{black}Problem (P2.A) maximizes the objective function of (P2) for a given maximum value of the co-channel interference $I_{\mathrm{comp}}$. Equation \eqref{p2.a.1} is the equivalent representation of \eqref{p2.c3}. Problem (P2.B) minimizes the co-channel interference while guaranteeing a maximum CRB $f^{*}$ and minimum SINR $\gamma^{*}$ through \eqref{p2.b.1}-\eqref{p2.b.3}. \color{black}Dropping the rank constraints, (P2.A) and (P2.B) are convex optimization problems. Algorithm 1 summarizes the robust block-level precoder design procedure. 

\begin{algorithm}[]
\caption{Robust Block Level Precoder Design}\label{algo1}
\textbf{Input}: $u$, $\{\mathbf{h}_{m,nk}\}$, $\{\mathbf{h}_{k}\}$, $\{\mathbf{G}_{mn}\} $, $I^{\mathrm{intra}}_{m,\mathrm{cbf}}$, $I^{\mathrm{inter}}_{\mathrm{cbf}}$,$\{\mathbf{W}_{mk}\}$, $P_{\mathrm{leak}}$, $I_{\mathrm{comp}}$;\\
\While{no convergence}
{
\If {BSs in CBF mode}{
Determine $\mathbf{C}_{m,\mathrm{cbf}}$ using $\{\mathbf{W}_{mk}\}$ in \eqref{cmcbf};\\
\color{black}Solve (P1.1.A) for each BS to obtain $\gamma^{*}$ and $\{{t^{*l}_{m,\mathrm{cbf}}}\}$;\\
Solve (P1.1.B) to update $I^{\mathrm{intra}}_{m,\mathrm{cbf}}$, $I^{\mathrm{inter}}_{\mathrm{cbf}}$,$P_{\mathrm{leak}}$, $\{\mathbf{W}_{mk}\}$: $I^{\mathrm{inter}}_{\mathrm{cbf}}= \mathrm{max}\left(\{I^{\mathrm{inter}}_{m,\mathrm{cbf}}\}\right)$ \\
}
\If{BSs in CoMP mode}{
Solve (P2.1.A) to obtain $\gamma^{*}$ and ${f^{*}}$;\\
Solve (P2.1.B) to update $I_{\mathrm{comp}}$: $I_{\mathrm{comp}}= \mathrm{max}\left(\{I_{m,\mathrm{comp}}\}\right)$\\
}
}
\textbf{Output}:{$\{\mathbf{W}_{m,k}\}$, $\{\mathbf{W}_{k}\}$}.
\end{algorithm} 
\color{black}Note that, for a given $\gamma$, $\rho=1$, $\delta=0$ $J=1$ setting, (P1), and (P2) are equivalent to the BLP design formulation, Eq. (18), of \cite{liu2021cramer}. We verify the equivalence of the corresponding CRB expressions in Fig. \ref{resultfigure1}. Certainly, our investigation into the multi-cell version of the CRB minimization problem introduces substantial differences in modeling. These distinctions make it impractical to directly apply the solutions presented in \cite{liu2021cramer}\color{black}. Furthermore, when $\delta=0$, both (P1) and (P2)  reduce to the respective CBF and CoMP BLP design problems considered in our previous work \cite{babu2023multi}.
\section{Robust Symbol Level Precoding}\label{SLP}
\begin{figure}[]
\centering
\captionsetup{justification=centering}
\centerline{\includegraphics[width=0.6\columnwidth]{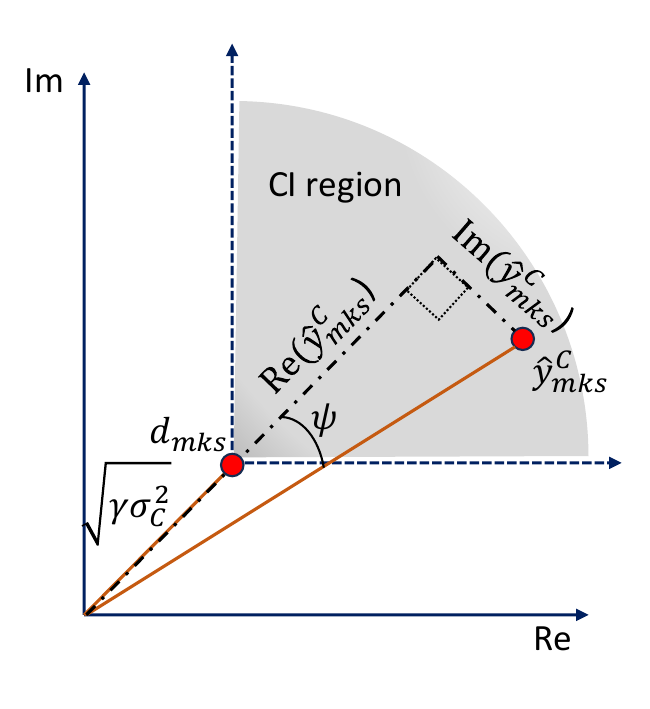}}
\caption{SLP: the $\langle d_{mks}$ rotated noiseless received signal $\hat{y}_{mks}=\mathbf{\tilde{h}}^{T}_{m,mks} \mathbf{{x}}_{ms}d_{mks}$ should fall in the CI region of the transmitted QPSK symbol $d_{mks}$.}
\label{figureslp}
\end{figure}
In this section, we explain the SLP design framework for CBF and CoMP BS coordination schemes.
\subsection{\color{black}Overview: SLP}
\color{black}The main idea involves leveraging co-channel interference constructively to increase the received signal power. This is achieved by instantaneously aligning the interfering signals with the desired signal at each receive antenna, as shown in Fig. \ref{figureslp}. \color{black}For a single-cell system, if the $s^{\mathrm{th}}$ transmitted symbol is M-PSK-modulated: $d_{mks}=d e^{j\phi_{mks}}$, the received signal at $U_{mk}$ can be represented as 
\begin{align}
{y}_{mks}^{\mathrm{C}} &= \mathbf{\tilde{h}}^{T}_{m,mk}\sum_{l=1}^K\mathbf{w}_{ml}d_{mls}+ {z}_{mks}^{\mathrm{C}},\nonumber\\
&= \mathbf{\tilde{h}}^{T}_{m,mk}\sum_{l=1}^K\mathbf{w}_{ml}e^{j(\phi_{mls}-\phi_{mks})}d_{mks}+ {z}_{mks}^{\mathrm{C}},\nonumber\\
&= \mathbf{\tilde{h}}^{T}_{m,mks} \mathbf{{x}}_{ms}d_{mks} +{z}_{mks}^{\mathrm{C}},
\label{yslp} 
\end{align}
where $\mathbf{\tilde{h}}^{T}_{m,mks}=\mathbf{\tilde{h}}^{T}_{m,mk} e^{j(-\phi_{mks})}$ and $\mathbf{{x}}_{ms}=\sum_{l=1}^K\mathbf{w}_{ml}e^{j(\phi_{mls})}$ (for $d=1$).
Hence, from Fig. \ref{figureslp}, the condition for $\mathbf{y}_{mks}^{\mathrm{C}}$ lying in the CI region of $d_{mks}$ to achieve a SINR of $\gamma$ can be expressed as,
\begin{align}
|\mathrm{Im}\left(\mathbf{\hat{h}}^{T}_{m,mks} \mathbf{{x}}_{ms}\right)| \leq \left(\mathrm{Re}\left(\mathbf{\tilde{h}}^{T}_{m,mks} \mathbf{{x}}_{ms}-\sigma_{\mathrm{C}}\sqrt{\gamma}\right)\right)\mathrm{tan}\psi, \label{slpcondition}
\end{align}
where, $\psi=\pi/M_{\mathrm{psk}}$.
For a detailed discussion about SLP, we refer the readers to \cite{masouros2015exploiting} and \cite{chrisSymbol}.
\subsection{SLP: Coordinated Beamforming}
In the CBF case, the $s^{\mathrm{th}}$ received symbol at $U_{mk}$ is obtained by extending \eqref{yslp} to a multi-cell scenario as follows: 
\begin{align}
{y}_{mks}^{\mathrm{C}} &= \mathbf{\tilde{h}}^{T}_{m,mks} \mathbf{x}_{ms}d_{mks} +\sum_{n\neq m}^{J}\mathbf{\tilde{h}}^{T}_{n,mks} \mathbf{x}_{ns}d_{nks}+\mathbf{z}_{mks}^{\mathrm{C}},
\label{yslpcbf} 
\end{align}
 where $\mathbf{\tilde{h}}^{T}_{n,mks}= \mathbf{\tilde{h}}^{T}_{n,mk} e^{j(-\phi_{nks})}$ and $\mathbf{x}_{ns}=\sum_{l=1}^K\mathbf{w}_{nl}e^{j(\phi_{nls})}$. Since the data symbols are not shared among the BSs in the CBF mode, the SLP design can only align the intra-cell interference to enhance the received signal power, whereas the inter-cell interference degrades the SINR value. Hence, the SINR constraint is equivalently written as a CI constraint given by \eqref{slpconditioncbf}, where $\gamma^{'}=\sqrt{\gamma}$.
 
 The corresponding problem formulation using SLP is given by 
 \begin{subequations}
 \begin{align}
\mathrm{(P3):}& \color{black}\underset{\{\mathbf{x}_{ms}\}, \{\mathbf{R}_{\mathbf{X}_{ms}}\}}{\mathrm{maximize}}\,\,\,\,  \frac{{-t^{1}_{m,\mathrm{cbf}}}\,\rho}{\mathrm{NF}^{\mathrm{slp}}_{\mathrm{R},{\mathrm{cbf}}}} \,\,  +\frac{(1-\rho)\gamma^{'}}{\mathrm{NF}^{\mathrm{slp}}_{\mathrm{C},{\mathrm{cbf}}}},\nonumber\\
\mathrm{s.t.}&\color{black}\begin{bmatrix}
\mathbf{F}_{m,\mathrm{cbf}}& \mathbf{I}_{:l}\\ \mathbf{I}^{T}_{:l}  
 & t^{l}_{m,\mathrm{cbf}} \\
\end{bmatrix} \succeq 0 \quad \forall l\in\{1,2,3\} \quad \forall m\label{slcbfschur}\\
& \mathbf{v}^{\mathrm{T}}_{mn} \frac{1}{L}\sum_{s}^{L} \mathbf{R}_{\mathbf{X}_{ms}} \mathbf{v}^{*}_{mn} \leq P_{\mathrm{leak}}, \quad \forall n, \label{leakageslp}\\
& \frac{1}{L}\mathrm{tr}\left(\sum_{s=1}^{L}\mathbf{R}_{\mathbf{X}_{ms}}\right)  \leq P_{\mathrm{t}},\label{p3.c3}\\
&\begin{bmatrix}
 \mathbf{R}_{\mathbf{X}_{ms}}& \mathbf{x}_{ms}\\ 
 \mathbf{x}^{H}_{ms}& 1
\end{bmatrix}\succeq 0; \,\, \mathbf{R}_{\mathbf{X}_{ms}}; \,\, \succeq 0 \, \forall s,\label{p3.c2}\\
&\eqref{slpconditioncbf}.
\end{align}
\end{subequations}
\color{black}The objective function of (P3) is the weighted combination of minimizing the CRB and maximizing the minimum SINR values. Note that, unlike the BLP design problem (P1), the optimization variables in this context are the transmit symbol vector $\mathbf{x}_{ms}$ and its covariance matrix $\mathbf{R}_{\mathbf{x}_{ms}}$. \color{black}Additionally, the entries of $\mathbf{F}_{m,{\mathrm{cbf}}}$ value is a function $\mathbf{R}_{\mathbf{x}_{m}} =(1/L)\sum_{s=1}^{L}\mathbf{x}_{ms}\mathbf{x}^{H}_{ms}=(1/L)\sum_{s=1}^{L}\mathbf{R}_{\mathbf{x}_{ms}}$. Similar to (P1), $\mathrm{NF}^{\mathrm{slp}}_{\mathrm{R},{\mathrm{cbf}}}$ and $\mathrm{NF}^{\mathrm{slp}}_{\mathrm{C},{\mathrm{cbf}}}$ are determined by solving (P3) for $\rho=1$ and $\rho=0$, respectively. Equation \eqref{slcbfschur} limits the $l^{\mathrm{th}}$ diagonal of $\mathbf{F}^{-1}_{m,\mathrm{cbf}}$ to be less than or equal to $t^{l}_{m,\mathrm{cbf}}$ using the Schur complement. Equations \eqref{leakageslp} and \eqref{p3.c3} are the respective inter-cell power leakage and average power constraints. Equation \eqref{p3.c2} is the semi-definite relaxed (SDR) representation of the relation between $\mathbf{R}_{\mathbf{x}_{ms}}$ and ${\mathbf{x}_{ms}}$ using the Schur complement. Solving (P3) poses challenges stemming from (a) the infinite possibilities of $\mathbf{e}_{mks}$ in \eqref{slpconditioncbf}, (b) the non-convex multiplication between $\gamma^{'}$ and the interference terms in \eqref{slpconditioncbf} and (c) the non-convex form of $\mathbf{C}_{m,\mathrm{cbf}}$ in \eqref{slcbfschur}.\color{black}
\begin{figure*}
\begin{subequations}
\begin{align}
&\underset{\|\mathbf{e}_{mks}\|^{2}\leq \delta^{2}}{\mathrm{max}}\left(\left|\mathrm{Im}\left\lbrace\mathbf{\tilde{h}}^{T}_{m,mks}\mathbf{x}_{ms} \right\rbrace\right| - \left(\mathrm{Re}\left\lbrace\mathbf{\tilde{h}}^{T}_{m,mks}\mathbf{x}_{ms} \right\rbrace\right)\mathrm{tan}\psi +\gamma^{'}\left(\sqrt{\left(\sigma^{2}_{\mathrm{C}}+ \sum_{n\neq m}^{J}|\mathbf{\tilde{h}}^{T}_{n,mks}\mathbf{x}_{ns}|^{2}\right)}\right)\mathrm{tan}\psi\right) \leq 0 \, \forall k,s\label{slpconditioncbf}\\
& \underset{\|\mathbf{e}_{mks}\|^{2}\leq \delta^{2}}{\mathrm{max}}\left(\mathbf{\hat{f}}^{T}_{m,mks}\mathbf{\hat{x}}_{ms}-\mathbf{\hat{f}}^{T}_{m,mks}\mathbf{\hat{x}}^{'}_{ms}\mathrm{tan}\psi  + \gamma^{'} \left(\sqrt{\left(\sigma^{2}_{\mathrm{C}}+ \sum_{n\neq m}^{J}|\mathbf{\tilde{h}}^{T}_{n,mks}\mathbf{x}_{ns}|^{2}\right)}\right) \mathrm{tan}\psi + \mathbf{\hat{e}}_{mks} (\mathbf{\hat{x}}_{ms}-\mathbf{\hat{x}}^{'}_{ms}\mathrm{tan}\psi) \right)\leq 0\label{ci1}\\
& \underset{\|\mathbf{e}_{mks}\|^{2}\leq \delta^{2}}{\mathrm{max}}\left(-\mathbf{\hat{f}}^{T}_{m,mks}\mathbf{\hat{x}}_{ms}-\mathbf{\hat{f}}^{T}_{m,mks}\mathbf{\hat{x}}^{'}_{ms}\mathrm{tan}\psi  + \gamma^{'} \left(\sqrt{\left(\sigma^{2}_{\mathrm{C}}+ \sum_{n\neq m}^{J}|\mathbf{\tilde{h}}^{T}_{n,mks}\mathbf{x}_{ns}|^{2}\right)}\right) \mathrm{tan}\psi + \mathbf{\hat{e}}_{mks} (-\mathbf{\hat{x}}_{ms}-\mathbf{\hat{x}}^{'}_{ms}\mathrm{tan}\psi)\right) \leq 0\label{ci2}
\end{align}
\end{subequations}
\end{figure*}

To tackle (a) and (b), let $\mathbf{\tilde{h}}_{m,mks}=\mathbf{\tilde{h}}^{\mathrm{R}}_{m,mks} +j\mathbf{\tilde{h}}^{\mathrm{I}}_{m,mks}=\mathbf{{h}}^{\mathrm{R}}_{m,mks} +j\mathbf{{h}}^{\mathrm{I}}_{m,mks}+\mathbf{e}^{\mathrm{R}}_{mks} +j\mathbf{e}^{\mathrm{I}}_{mks}$, then 
\begin{align}
    &\mathrm{Im}\left\lbrace\mathbf{\tilde{h}}^{T}_{m,mks}\mathbf{x}_{ms} \right\rbrace = \mathbf{\hat{f}}^{T}_{m,mks}\mathbf{\hat{x}}_{ms} +  \mathbf{\hat{e}}^{T}_{mks}\mathbf{\hat{x}}_{ms}\\
    & \mathrm{Re}\left\lbrace\mathbf{\tilde{h}}^{T}_{m,mks}\mathbf{x}_{ms} \right\rbrace = \mathbf{\hat{f}}^{T}_{m,mks}\mathbf{\hat{x}}^{'}_{ms} +  \mathbf{\hat{e}}^{T}_{mks}\mathbf{\hat{x}}^{'}_{ms}
\end{align}
where $ \mathbf{\hat{f}}_{m,mks}=\left[\mathbf{{h}}^{\mathrm{R}}_{m,mks};\mathbf{{h}}^{\mathrm{I}}_{m,mks}\right]$, $ \mathbf{\hat{e}}_{mks}=\left[\mathbf{{e}}^{\mathrm{R}}_{mks};\mathbf{{e}}^{\mathrm{I}}_{mks}\right]$, $\mathbf{\hat{x}}_{ms}=\left[\mathbf{x}^{\mathrm{I}}_{ms};\mathbf{x}^{\mathrm{R}}_{ms}\right]$, and $\mathbf{\hat{x}}^{'}_{ms}=\left[\mathbf{x}^{\mathrm{R}}_{ms};-\mathbf{x}^{\mathrm{I}}_{ms}\right]$. \color{black}Then \eqref{slpconditioncbf} can be equivalently expressed as \eqref{ci1} and \eqref{ci2}. Taking the maximum value of the LHS, the constraints are equal to,\color{black}
\begin{align}
    & \mathbf{\hat{f}}^{T}_{m,mks}\mathbf{\hat{x}}_{ms}-\mathbf{\hat{f}}^{T}_{m,mks}\mathbf{\hat{x}}^{'}_{ms}\mathrm{tan}\psi  + \gamma^{'} \sqrt{\sigma^{2}_{\mathrm{C}}+I^{^{2}\mathrm{inter}}_{\mathrm{sl,cbf}}} \mathrm{tan}\psi \nonumber\\ 
    &+ \delta \|\mathbf{\hat{x}}_{ms}-\mathbf{\hat{x}}^{'}_{ms}\mathrm{tan}\psi\| \leq 0,\label{f1}\\
& -\mathbf{\hat{f}}^{T}_{m,mks}\mathbf{\hat{x}}_{ms}-\mathbf{\hat{f}}^{T}_{m,mks}\mathbf{\hat{x}}^{'}_{ms}\mathrm{tan}\psi  + \gamma^{'} \sqrt{\sigma^{2}_{\mathrm{C}}+I^{^{2}\mathrm{inter}}_{\mathrm{sl,cbf}}} \mathrm{tan}\psi \nonumber\\
&+ \delta \|\mathbf{\hat{x}}_{ms}+\mathbf{\hat{x}}^{'}_{ms}\mathrm{tan}\psi\| \leq 0, \label{f2}
\end{align}
where $I^{^{2}\mathrm{inter}}_{\mathrm{sl,cbf}}$ is the maximum ICI that satisfies
\begin{align}
  & \underset{\|\mathbf{e}_{nks}\|^{2}\leq \delta^{2}}{\mathrm{max}} \|\mathbf{\tilde{h}}^{T}_{m,nks}\mathbf{x}_{ms}\|^{2} \leq  {\frac{I^{^{2}\mathrm{inter}}_{\mathrm{sl,cbf}}}{J-1}}. \label{slpcbfici}
  \end{align}
  Note that the squared representation of ICI: $I^{^{2}\mathrm{inter}}_{\mathrm{sl,cbf}}$, is to ensure convexity for the constraints \eqref{f1} and \eqref{f2}. Using Cauchy-Schwarz inequality, we have
  \begin{align}
   &\left(\|\mathbf{{h}}^{T}_{m,nks}+\mathbf{{e}}^{T}_{nks}\right)\mathbf{x}_{ms}\|^{2}
    \leq \left(\|\mathbf{{h}}^{T}_{m,nks}\mathbf{x}_{ms}\|+\|\mathbf{{e}}^{T}_{nks}\mathbf{x}_{ms}\|\right)^{2}
    \end{align}
    Hence, \eqref{slpcbfici} is equivalently written as
    \begin{align}
    &\|\mathbf{{h}}^{T}_{m,nks}\mathbf{x}_{ms}\|+\delta\|\mathbf{x}_{ms}\|\leq {\frac{I^{\mathrm{inter}}_{\mathrm{sl,cbf}}}{\sqrt{J-1}}} \label{f3}.
\end{align}
Note that for a given $I^{\mathrm{inter}}_{\mathrm{sl},\mathrm{cbf}}$, the constraints \eqref{f1}, \eqref{f2}, and \eqref{f3} that combined represent the SINR constraint are convex. \color{black}Finally, given $\{\mathbf{R}_{\mathbf{X}_{n}}\}$, the estimation of $\mathbf{C}_{m,\mathrm{cbf}}$ can be achieved using \eqref{cmcbf}. This makes the entries of $\mathbf{F}_{m,\mathrm{cbf}}$ in \eqref{slcbfschur} an affine function of $\mathbf{R}_{\mathbf{X}_{m}}$.\color{black} 

Hence, we solve (P3) by solving the following two optimization problems alternatively untill convergence: 
\begin{subequations}
\begin{align}
\mathrm{(P3.A):}& \color{black}\underset{\{\mathbf{x}_{ms}\}, \{\mathbf{R}_{\mathbf{X}_{ms}}\}}{\mathrm{maximize}}\,\,\,\,  \frac{{-t^{1}_{m,\mathrm{cbf}}}\,\rho}{\mathrm{NF}^{\mathrm{slp}}_{\mathrm{R},{\mathrm{cbf}}}} \,\,  +\frac{(1-\rho)\gamma^{'}}{\mathrm{NF}^{\mathrm{slp}}_{\mathrm{C},{\mathrm{cbf}}}}, \nonumber\\
\mathrm{s.t.} & \eqref{slcbfschur}-\eqref{p3.c2}, \eqref{f1}, \eqref{f2}, \eqref{f3}; \label{p3.a.1}
\end{align}
\end{subequations}
\begin{subequations}
 \begin{align}
\mathrm{(P3.B):} &\underset{\{\mathbf{x}_{ms}\},\{\mathbf{R}_{\mathbf{x}_{ms}}\}}{\mathrm{minimize}}\,\,\,\,\frac{P_{\mathrm{leak}}}{P^{\mathrm{max}}_{\mathrm{leak}}} + \frac{I^{{2}\mathrm{inter}}_{m,\mathrm{sl,cbf}}}{I_{\mathrm{max}}} \nonumber\\
\mathrm{s.t.}&\color{black}\begin{bmatrix}
\mathbf{F}_{m,\mathrm{cbf}}& \mathbf{I}_{:l}\\ \mathbf{I}^{T}_{:l}  
 & t^{*l}_{m,\mathrm{cbf}} \\
\end{bmatrix} \succeq 0 \quad \forall l\in\{1,2,3\} \quad \forall m,\label{p3.b.1}\\
& \mathbf{\hat{f}}^{T}_{m,mks}\mathbf{\hat{x}}_{ms}-\mathbf{\hat{f}}^{T}_{m,mks}\mathbf{\hat{x}}^{'}_{ms}\mathrm{tan}\psi\nonumber\\ & + \gamma^{'^{*}} \sqrt{\sigma^{2}_{\mathrm{C}}+I^{^{2}\mathrm{inter}}_{\mathrm{sl,cbf}}} \mathrm{tan}\psi  
    + \delta \|\mathbf{\hat{x}}_{ms}-\mathbf{\hat{x}}^{'}_{ms}\mathrm{tan}\psi\| \leq 0\label{p3.b.2}\\
& -\mathbf{\hat{f}}^{T}_{m,mks}\mathbf{\hat{x}}_{ms}-\mathbf{\hat{f}}^{T}_{m,mks}\mathbf{\hat{x}}^{'}_{ms}\mathrm{tan}\psi \nonumber\\
& + \gamma^{'^{*}} \sqrt{\sigma^{2}_{\mathrm{C}}+I^{^{2}\mathrm{inter}}_{\mathrm{sl,cbf}}} \mathrm{tan}\psi + \delta \|\mathbf{\hat{x}}_{ms}+\mathbf{\hat{x}}^{'}_{ms}\mathrm{tan}\psi\| \leq 0 \label{p3.b.3}\\
&\|\mathbf{{h}}^{T}_{m,nks}\mathbf{x}_{ms}\|+\delta\|\mathbf{x}_{ms}\|\leq {\frac{I^{\mathrm{inter}}_{m,\mathrm{sl,cbf}}}{\sqrt{J-1}}},\label{p3.b.4}\\
& I^{\mathrm{inter}}_{m,\mathrm{sl,cbf}}<= I^{\mathrm{inter}}_{\mathrm{sl,cbf}},\eqref{leakageslp}- \eqref{p3.c2}\label{p3.b.5}.
\end{align}
\end{subequations}
\color{black}Problem (P3.A) maximizes the objective function of (P3) for given $I^{\mathrm{inter}}_{\mathrm{sl, cbf}}$ and $P_{\mathrm{leak}}$ values. In (P3.B), we minimize the leakage power to the neighboring cells' users while guaranteeing given sensing ($t^{*l}_{m,\mathrm{cbf}}$) and communication performance ($\gamma^{'^{*}}$) through \eqref{p3.b.1}-\eqref{p3.b.4}. The value of the constant $I^{\mathrm{inter}}_{\mathrm{sl, cbf}}$ is given by the maximum of $\{I^{\mathrm{inter}}_{m,\mathrm{sl, cbf}}\}$. \color{black}The corresponding precoder design procedure is similar to Steps 8 and 9 of Algorithm 1.
\subsection{SLP: Coordinated Multipoint}
\color{black}In the CoMP mode, the $s^{\mathrm{th}}$ received symbol at $U_{k}$ is given by 
\begin{align}    
{y}_{ks}^{\mathrm{C}} &= \mathbf{\tilde{h}}^{T}_{ks} \mathbf{x}_{s}d_{ks} + \mathbf{z}_{ks}^{\mathrm{C}},
\label{yslpcomp}
\end{align}
 where $\mathbf{\tilde{h}}^{T}_{ks}= \mathbf{\tilde{h}}^{T}_{k} e^{j(-\phi_{ks})}$ and $\mathbf{x}_{s}=\sum_{l=1}^{JK}\mathbf{w}_{l}e^{j(\phi_{ls})}$.
Since the data is shared among the BSs, all the interfering signal power can be aligned with the useful signal power. Hence, the SINR constraint in the CoMP mode, derived from \eqref{f1} and \eqref{f2}, can be written as \color{black}
\begin{align}
    & \mathbf{\hat{f}}^{T}_{ks}\mathbf{\hat{x}}_{s}-\mathbf{\hat{f}}^{T}_{ks}\mathbf{\hat{x}}^{'}_{s}\mathrm{tan}\psi  + \gamma^{'} {\sigma_{\mathrm{C}}} \mathrm{tan}\psi + \delta \|\mathbf{\hat{x}}_{s}-\mathbf{\hat{x}}^{'}_{s}\mathrm{tan}\psi\| \leq 0\label{f1comp},
\end{align}
\begin{align}
& -\mathbf{\hat{f}}^{T}_{ks}\mathbf{\hat{x}}_{s}-\mathbf{\hat{f}}^{T}_{ks}\mathbf{\hat{x}}^{'}_{s}\mathrm{tan}\psi  + \gamma^{'} {\sigma_{\mathrm{C}}} \mathrm{tan}\psi + \delta \|\mathbf{\hat{x}}_{s}+\mathbf{\hat{x}}^{'}_{s}\mathrm{tan}\psi\| \leq 0,\label{f2comp}
\end{align}
where $\mathbf{\tilde{h}}^{T}_{ks}=\mathbf{\tilde{h}}^{\mathrm{R}}_{ks} +j\mathbf{\tilde{h}}^{\mathrm{I}}_{ks}=\mathbf{{h}}^{\mathrm{R}}_{ks} +j\mathbf{{h}}^{\mathrm{I}}_{ks}+\mathbf{e}^{\mathrm{R}}_{ks} +j\mathbf{e}^{\mathrm{I}}_{ks}$ and $\mathbf{x}_{s}=\sum_{l=1}^{2K}\mathbf{w}_{l}e^{j(\phi_{ls})}$; $ \mathbf{\hat{f}}_{ks}=\left[\mathbf{{h}}^{\mathrm{R}}_{ks};\mathbf{{h}}^{\mathrm{I}}_{ks}\right]$, $ \mathbf{\hat{e}}_{ks}=\left[\mathbf{{e}}^{\mathrm{R}}_{ks};\mathbf{{e}}^{\mathrm{I}}_{ks}\right]$, $\mathbf{\hat{x}}_{s}=\left[\mathbf{x}^{\mathrm{I}}_{s};\mathbf{x}^{\mathrm{R}}_{s}\right]$, and $\mathbf{\hat{x}}^{'}_{s}=\left[\mathbf{x}^{\mathrm{R}}_{s};-\mathbf{x}^{\mathrm{I}}_{s}\right]$. The corresponding optimization problem is formulated as,
\begin{subequations}
\begin{align}
&\mathrm{(P4):} \color{black}\underset{\{\mathbf{x}_{s}\},\{\mathbf{R}_{\mathbf{x}_{s}}\}}{\mathrm{maximize}}\,\,\,\,  \frac{ -f\rho}{\mathrm{NF}^{\mathrm{slp}}_{\mathrm{R},{\mathrm{comp}}}} \,\,  +\frac{(1-\rho)\gamma^{'}}{\mathrm{NF}^{\mathrm{slp}}_{\mathrm{C},{\mathrm{comp}}}}, \nonumber\\
&\color{black}\begin{bmatrix}
\mathbf{F}_{m,\mathrm{comp}}& \mathbf{I}_{:l}\\ \mathbf{I}^{T}_{:l}  
 & t^{l}_{m,\mathrm{comp}} \\
\end{bmatrix} \succeq 0 \quad \forall l\in\{1,2,3\} \quad \forall m\label{p4.c1}\\
&\color{black}t^{1}_{m,\mathrm{comp}}\leq f, \quad \forall m,\label{p4.c2}\\
&\begin{bmatrix}
 \mathbf{R}_{\mathbf{X}_{s}}& \mathbf{x}_{s}\\ 
 \mathbf{x}^{H}_{s}& 1
\end{bmatrix}\succeq 0 \,\, \forall s\,\, \mathbf{R}_{\mathbf{X}_{s}} \succeq 0,\label{p4.c3}\\
& \frac{1}{L}\mathrm{tr}\left(\mathbf{D}_{m}\sum_{s=1}^{L}\mathbf{R}_{\mathbf{X}_{s}}\mathbf{D}^{'}_{m}\right)  \leq P_{\mathrm{t}}, \label{p4.c4}\\
& \eqref{f1comp}, \eqref{f2comp}.
\end{align}
\end{subequations}
\color{black}Problem (P4) maximizes the same objective as (P2). Here, \eqref{p4.c1}, \eqref{p4.c2}, and \eqref{p4.c3} are the same as \eqref{compschur1}, \eqref{compschur2}, and \eqref{p2.c3}. The convex optimization problem (P4) can be solved using Matlab's CVX \cite{grant2014cvx}.
\color{black}Due to the SDR in (P3.A) and (P4): $\mathbf{R}_{\mathbf{X}_{s}} \geq \mathbf{x}_{s}\mathbf{x}^{H}_{s} $, the obtained objective values represent the upper bound. \color{black}Moreover, by setting $\rho=0$, (P3) and (P4) can be modified to the SLP design formulations, Eq. (16) and Eq. (39), in \cite{wei2019multi} where the objective was to minimize the total power subject to a minimum SINR constraint\color{black}.
\section{\color{black}Convergence, Complexity, and Overhead Analysis}
\color{black}As explained in Sections \ref{BLP} and \ref{SLP}, the robust precoders for CBF-BLP, CoMP-BLP, and CBF-SLP are obtained using the AO algorithm given in Algorithm 1. Recall the AO algorithm employed alternates between optimizing precoders to maximize the weighted combination of sensing and communication performance metrics under specified maximum interference values, and minimizing communication and sensing interference values while ensuring given sensing and communication performance thresholds. The convergence of  such an iterative algorithm is proven in \cite{bezdek2003convergence}, and we also show it \color{black}empirically \color{black} in Fig. \ref{resultfigure6}.

The optimization problems that give the robust precoders: (P1.1.A), (P1.1.B), (P2.1.A), (P2.1.B), (P3.A), (P3.B), and (P4) are convex optimization problems obtained through SDR technique which can be solved using Algorithm 1. Since there is no closed-form expression for the iterative complexity of SDP optimization, we give the time complexity of each iteration of Algorithm 1 solved using interior point methods. If $V$ is the number of variables, and $C$ is the number of constraints, then the computational complexity in time is 
\begin{align}
   \color{black} C_{\mathrm{t}}\sim \mathcal{O}(\sqrt{V}(CV^2+C^{2.7}+V^{2.7})\mathrm{log}(1/\epsilon)),
\end{align} a polynomial function of $V$ and $C$  with $\epsilon$ being the relative accuracy \cite{jiang2020faster}. Note that BLP optimizations ((P1.1.A), (P1.1.B), (P2.1.A), (P2.1.B)) have low computational complexity compared to the SLP optimization ((P3.A), (P3.B), and (P4)) since the number of constraints of the later increases with the length of the communication frame. However, the receiver complexity at the user end in the SLP cases is low since it leverages the co-channel interference constructively. Furthermore, the complexity and overhead of SLP can be reduced by adopting a block-level approach as proposed in \cite{li2021symbol}.

The total overhead of sharing CSI, user data and ICR direction amongst $J$ BSs in CoMP mode is 
\begin{align}
   \color{black} O_{\mathrm{CoMP}}\sim \mathcal{O}(J(J-1)(K(J \beta_{\mathrm{h}}+\beta_{\mathrm{S}})+\beta_{\mathrm{dir}}),
\end{align}
where $\beta_{\mathrm{h}}$ is the number of bits required to represent CSI from a BS to a user, $\beta_{\mathrm{S}}$ and $\beta_{\mathrm{dir}}$ represent bits required to exchange user's data and an ICR direction. The overhead associated with the CBF mode is 
\begin{align}
   \color{black} O_{\mathrm{CoMP}}\sim \mathcal{O}(J(J-1)(K(J \beta_{\mathrm{h}})+\beta_{\mathrm{dir}}),
\end{align}
and is relatively lower due to the non-exchange of user data amongst BSs.
\color{black}
\begin{figure}[]
\centering
\captionsetup{justification=centering}
\centerline{\includegraphics[width=0.9\columnwidth]{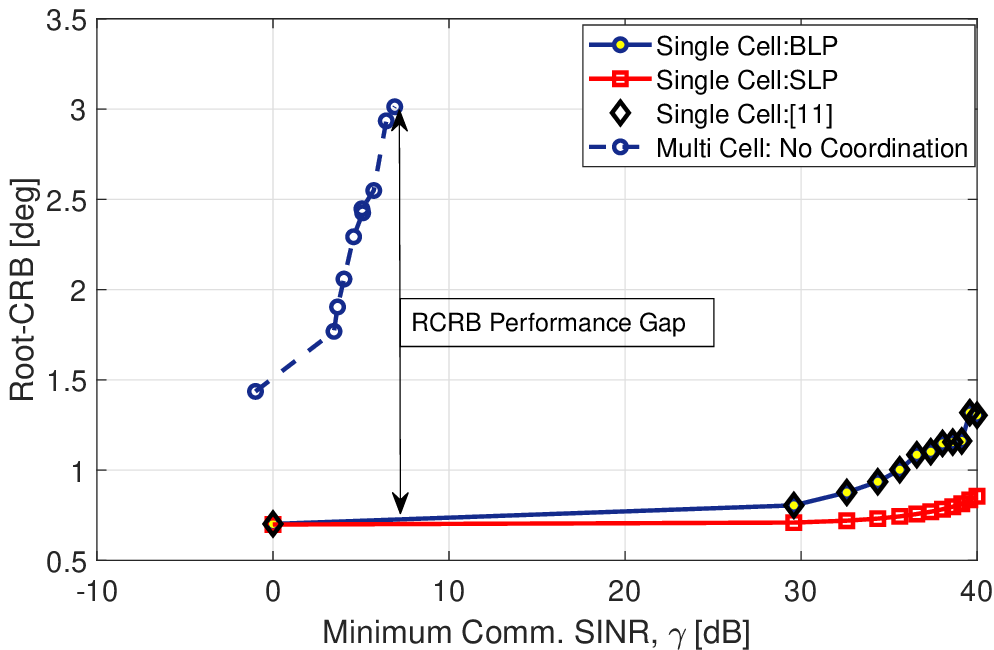}}
\caption{RCRB Vs minimum SINR performance when $N_{\mathrm{tx}}=10$, $N_{\mathrm{rx}}=10$, $K=3$, $\delta =0$, $M_{\mathrm{psk}}=4$. Single-cell solutions in the proximity of additional BSs are sub-optimal, motivating exploration of multi-cell ISAC setups.}
\label{resultfigure1}
\end{figure}
\begin{figure}[]
\centering
\captionsetup{justification=centering}
\centerline{\includegraphics[width=0.85\columnwidth]{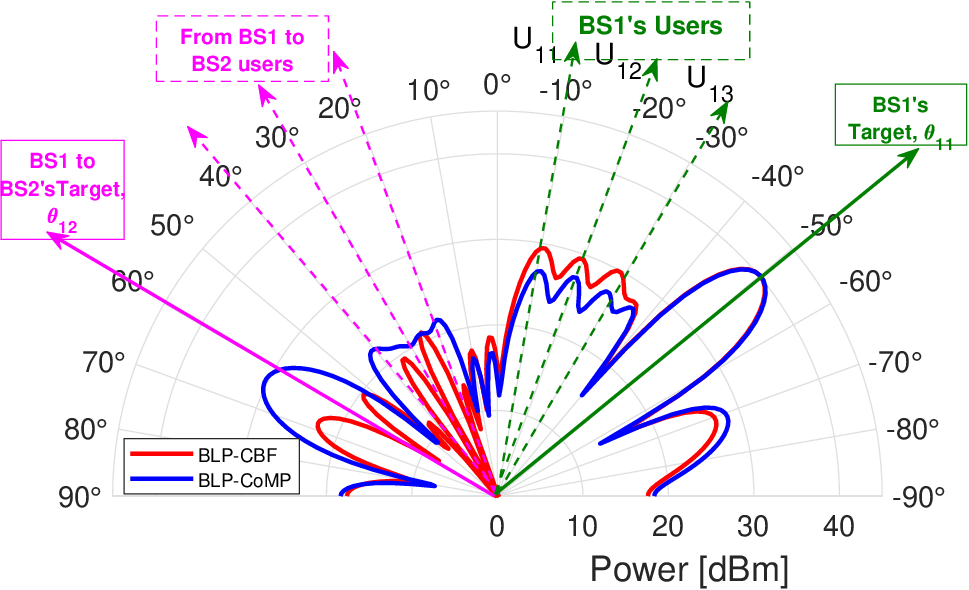}}
\caption{\color{black}Final beampattern for BS1, when $\gamma=40$ dB, $N_{\mathrm{tx}}=16$, $N_{\mathrm{rx}}=6$, $K=3$, $\delta =0$, $M_{\mathrm{psk}}=4$, $M_{\mathrm{psk}}=4$.}
\label{resultfigure2}
\end{figure}
\begin{figure}[]
\centering
\captionsetup{justification=centering}
\centerline{\includegraphics[width=0.8\columnwidth]{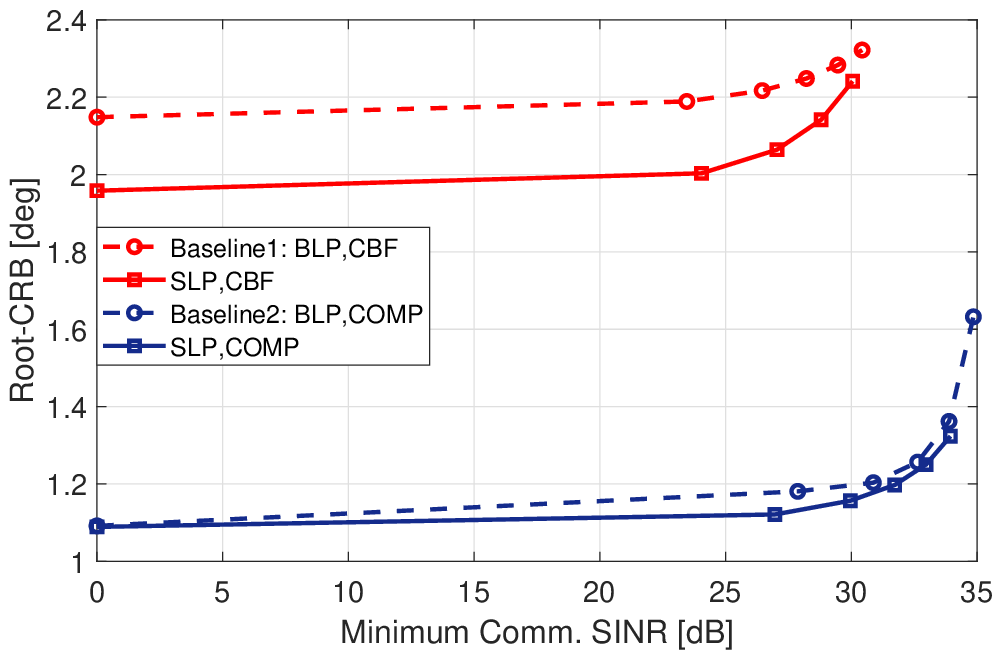}}
\caption{RCRB Vs. comm SINR threshold when $N_{\mathrm{tx}}=6$, $N_{\mathrm{rx}}=6$, $K=3$, $\delta=0$, $M_{\mathrm{psk}}=4$ .}
\label{resultfigure3}
\end{figure}
\begin{figure}[]
\centering
\captionsetup{justification=centering}
\centerline{\includegraphics[width=0.8\columnwidth]{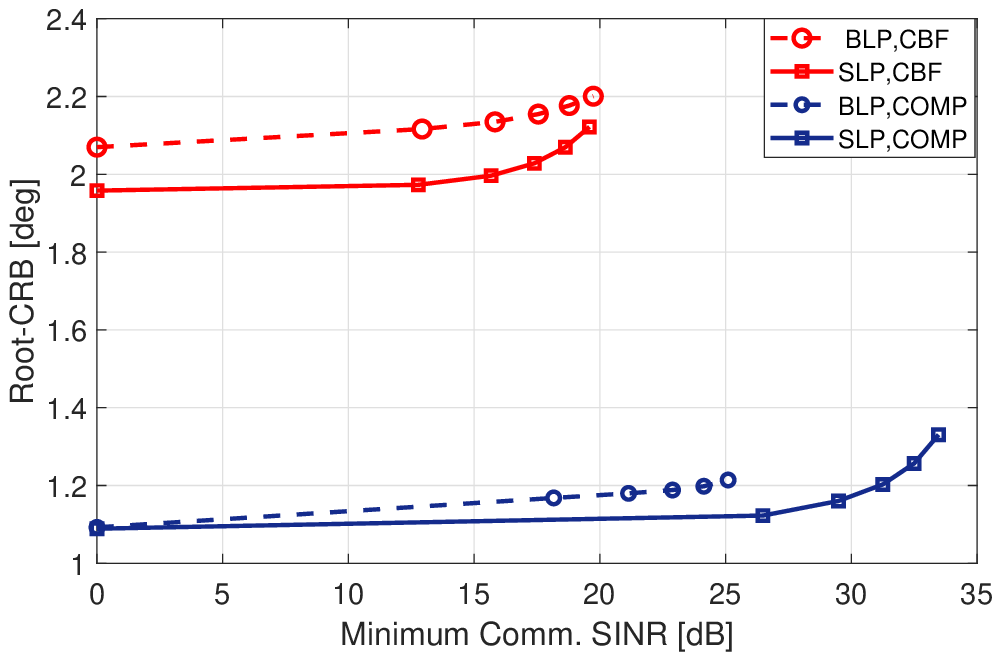}}
\caption{RCRB Vs. comm SINR threshold when, $N_{\mathrm{tx}}=6$, $N_{\mathrm{rx}}=6$, $K=3$, $\delta=0.01$, $M_{\mathrm{psk}}=4$ .}
\label{resultfigure4}
\end{figure}
\section{Numerical Evaluation}\label{numerical}
We present our main findings through numerical evaluations in this section. The simulation parameters are $J=2$, $K=3$, $P_{\mathrm{t}}/\sigma^{2}_{\mathrm{C}}=P_{\mathrm{t}}/\sigma^{2}_{\mathrm{R}}=40$ dB, $P_{\mathrm{leak}}^{\mathrm{max}}=4\pi P_{\mathrm{t}}$, and $f_c=5.6$ GHz. The targets are located at $\theta_{11}=-50^{\circ},\, \theta_{12}=60^{\circ},\,\theta_{22} = 50^{\circ},\, \theta_{21} = -60^{\circ}$.The communication channel gains are normalized such that the signal power received through an inter-cell link is considered to be reduced by a factor of 3. $\alpha_{mm}$ is selected such that the received  SNR  of  the  intra-cell echo  signal: $|\alpha_{mm}|^{2}LP_{\mathrm{t}}/\sigma^{2}_{\mathrm{R}}=1$ \cite{liu2021cramer}, whereas $|\alpha_{nm}|^{2}=|\alpha_{mm}|^{2}/3$. The ratio of the maximum tolerable interference values to the noise power is selected as $10$ dB. 
\begin{figure*} [t]  
        \centering
        \begin{subfigure}[b]{0.49\columnwidth}
            \centering
            \includegraphics[width=\columnwidth]{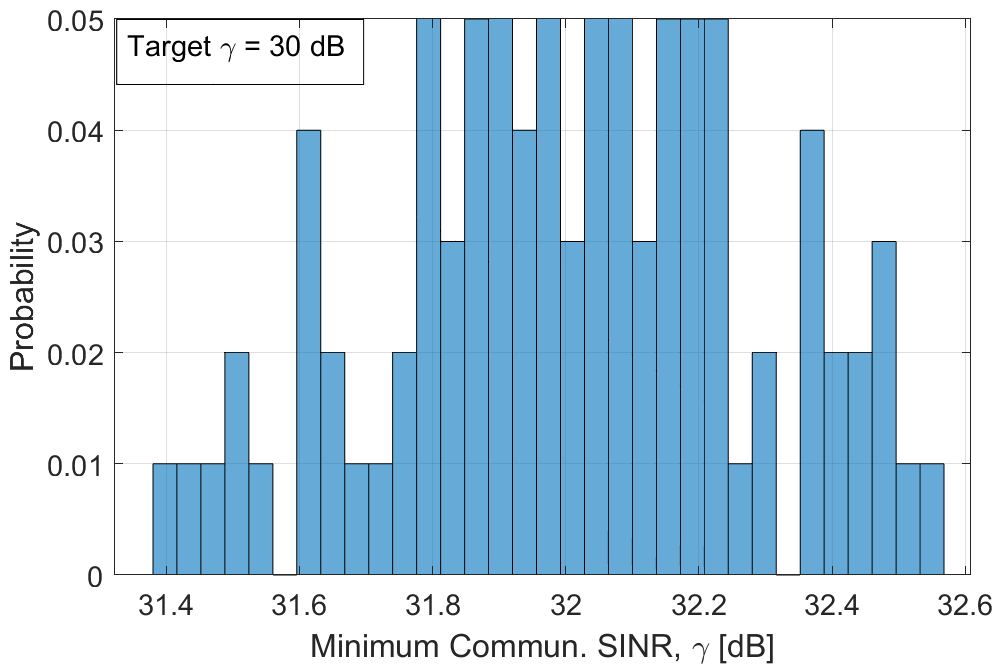}
            \caption{BLP CoMP.}\label{fig5a}
        \end{subfigure}
        \begin{subfigure}[b]{0.49\columnwidth}
            \centering
            \includegraphics[width=\columnwidth]{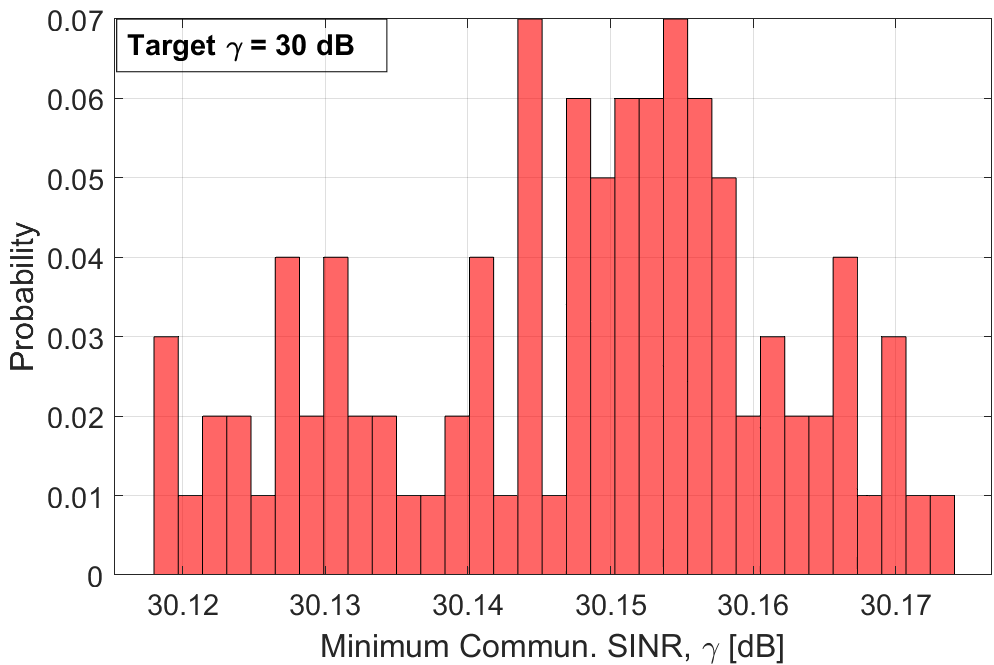}
            \caption{SLP CoMP.}\label{fig5b}
        \end{subfigure}
        \begin{subfigure}[b]{0.49\columnwidth}
            \centering
            \includegraphics[width=\columnwidth]{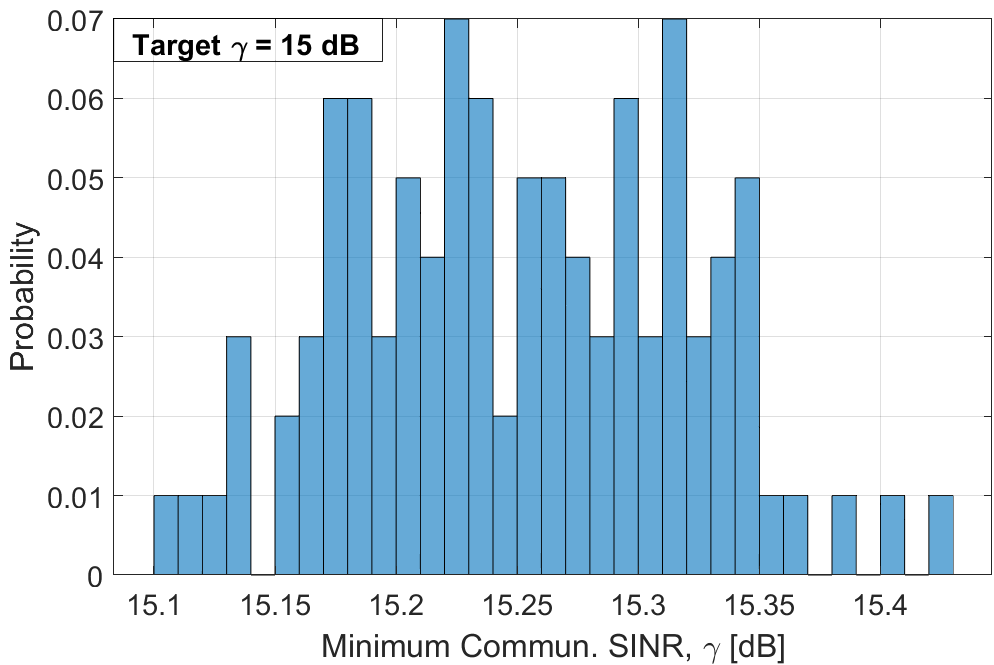}
            \caption{BLP CBF.}\label{fig5c}
        \end{subfigure}
        \begin{subfigure}[b]{0.49\columnwidth}
            \centering
            \includegraphics[width=\columnwidth]{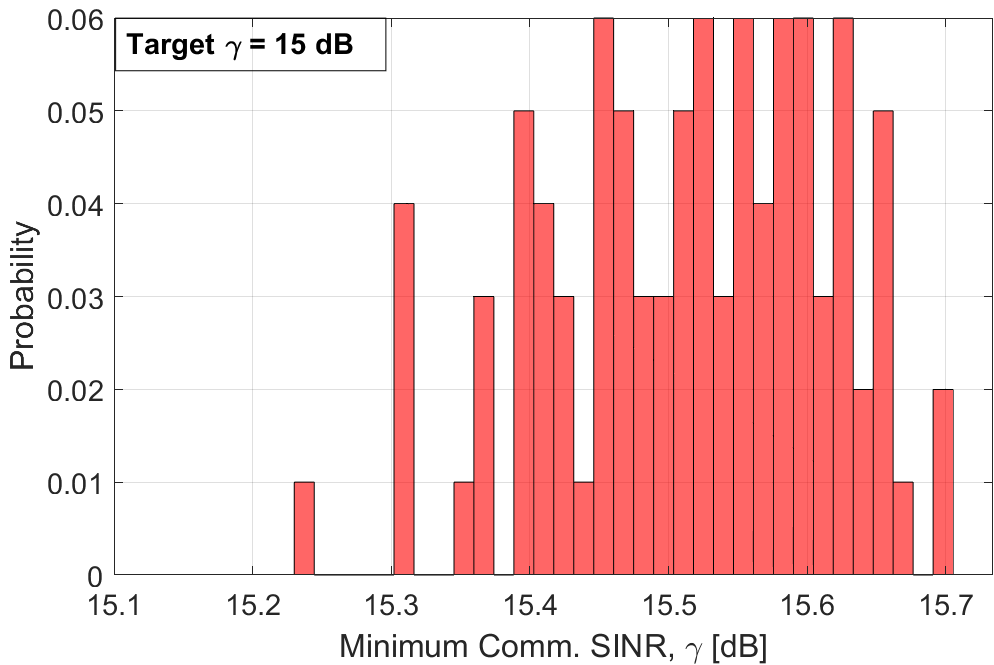}
            \caption{SLP CBF.}\label{fig5d}
        \end{subfigure}
        \caption{Probability distribution of observed minimum SINR for 100 different error vector realization, when $\gamma = 30$ dB (CoMP),$\gamma= 15$ dB (CBF) $N_{\mathrm{tx}}=16$, $N_{\mathrm{rx}}=6$, $\delta = 0.01$, $M_{\mathrm{psk}}=4$.} 
\end{figure*}

Fig. \ref{resultfigure1} shows the effect of neglecting the inter-cell links in a multi-cell ISAC system that uses BLP. The plots between the root CRB (RCRB) and the minimum SINR experienced by a communication user are obtained by solving (P1.1.A), neglecting the inter-cell links and assuming no CSI error. Then, the obtained precoders are used to determine the actual sensing versus communication performance trade-off. As seen in the figure, the additional signal power received through the inter-cell links can negatively affect both the sensing and the communication performances. This motivates us to consider a multi-cell ISAC setup since the single-cell solutions are suboptimal when additional BSs are deployed in proximity. \color{black}Moreover, in a single-cell setup, the derived CRB expression matches with that in \cite{liu2021cramer}\color{black}.
\begin{figure}[]
\centering
\captionsetup{justification=centering}
\centerline{\includegraphics[width=0.8\columnwidth]{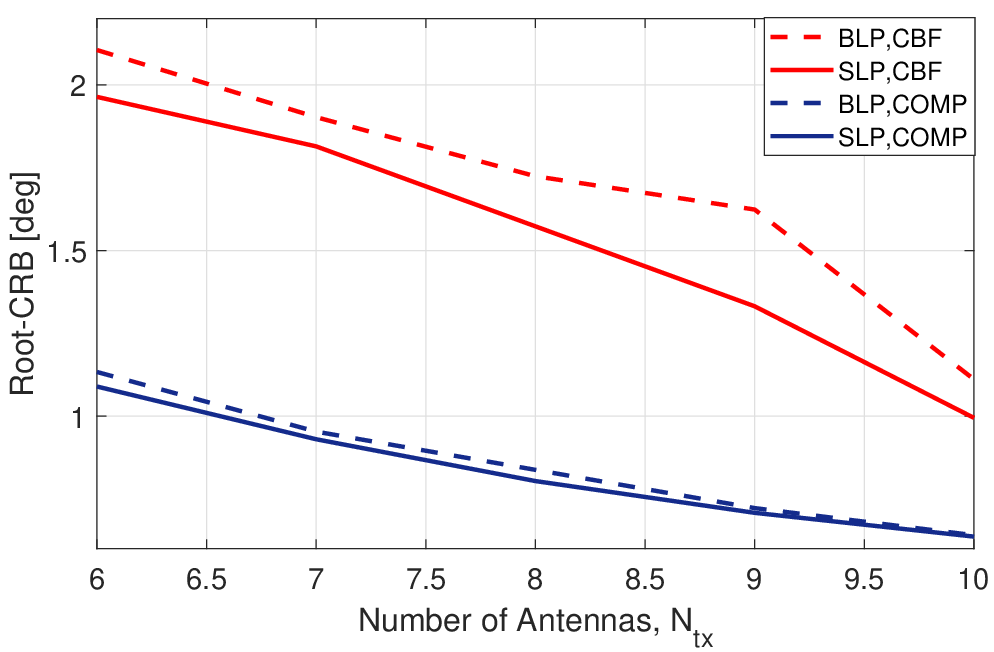}}
\caption{Variation of RCRB w.r.t the number of transmit antennas, when target $\gamma=10$ dB, $N_{\mathrm{rx}}=6$, $K=3$, $\delta=0.01$, $M_{\mathrm{psk}}=4$.}
\label{resultfigure7}
\end{figure}
\begin{figure}[]
\centering
\captionsetup{justification=centering}
\centerline{\includegraphics[width=0.8\columnwidth]{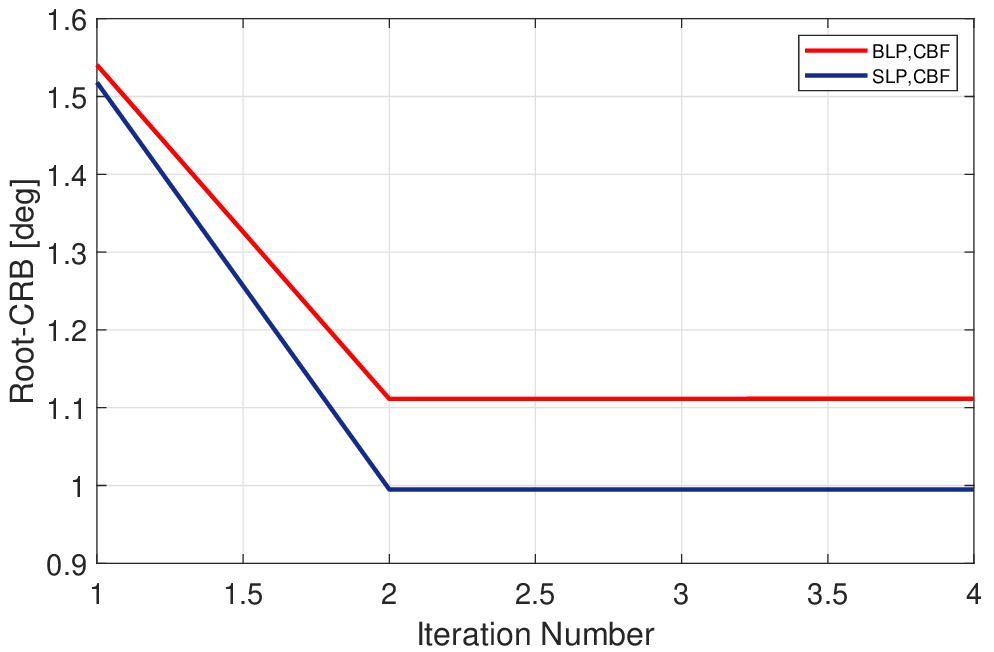}}
\caption{Convergence of Algorithm 1 when $\gamma=10$ dB, $N_{\mathrm{tx}}=10$, $N_{\mathrm{rx}}=6$, $K=3$, $\delta=0.01$, $M_{\mathrm{psk}}=4$.}
\label{resultfigure6}
\end{figure}

Fig. \ref{resultfigure2} shows the final transmit beampattern obtained using BLP applied to CBF and CoMP multi-cell scenarios. For better analysis, we assume the users are in LoS with the BSs along the directions marked in the figure. For the CBF scheme, the power radiated towards the neighboring BSs' \color{black}users/targets \color{black} through the sensing and the communication links is low compared to the CoMP scheme. This is because the inter-cell links degrade the sensing and communication performances in the CBF scheme due to the non-sharing of the user data among the BSs. On the other hand, the CoMP scenario improves the ISAC performance by radiating more power through the inter-cell links, \color{black}where it jointly serves the communication users while offering bi-static sensing\color{black}. 

Fig. \ref{resultfigure3} shows the RCRB- $\gamma$ trade-off for the considered four scenarios: BLP-CBF, BLP-CoMP, SLP-CBF, and SLP-CoMP, when the perfect knowledge of CSI is available at the BSs ($\delta=0$). \color{black}Recall that, in \cite{babu2023multi}, we examined BLP-CBF and BLP-CoMP under $\delta=0$. These solutions serve as benchmarks against their SLP counterparts\color{black}. In all the plots, the sensing error performance remains relatively low in the low $\gamma$ regime since the BS can meet the communication performance without sacrificing the power radiated towards the target. As the communication performance demand increases, the power radiated towards the target is reduced to meet the demand, thereby negatively affecting the sensing performance. As seen in the figure, the SLP technique outperforms the BLP counterparts: in the CBF scenario, the SLP technique utilizes the intra-cell interference constructively, thereby reducing the power needed to achieve the minimum SINR compared to the BLP technique, thus allowing the BS to radiate more power towards its target to improve the sensing performance. In the CoMP case, \color{black}firstly, there is an improvement in the sensing error due to the bi-static sensing. Moreover, \color{black} the inter-cell links are constructively utilized by the BLP technique to outperform the BLP-CBF scheme; however, it still suffers from the intra-cell interference while the SLP further improves the ISAC performance utilizing both the intra-cell and inter-cell interference. The slope of the plots changes as a function of the coupling between the users and the target channel. Here, we considered a weakly coupled scenario, whereas in a strongly coupled scenario where a user is a target of interest, both the sensing and communication performances can be improved in the high minimum SINR regime. 

Fig. \ref{resultfigure4} shows the effect of CSI uncertainty on the ISAC performance; the trend remains the same as in Fig. \ref{resultfigure3}. Moreover, the additional power set aside to incorporate the error in the available CSI information decreases the sensing accuracy and the range of achievable $\gamma$ compared to the perfect CSI available scenario. \color{black}The high-SINR regime characterizes the sensing performance achieved by precoder solutions primarily designed for communication, similar to the approach explored in \cite{wei2019multi}\color{black}.

In Figs. \ref{fig5a}-\ref{fig5d}, we show the robustness of the obtained solution against the CSI error when $\delta = 0.01$. The observed minimum SINR's probability distributions are obtained after 100 realizations of the error vector. As seen in the figure, all the observed minimum SINR values are above the set thresholds ($15$ dB for CBF and $30$ dB for CoMP), which shows that the minimum SINR constraint is satisfied under worst-case scenarios.

Fig. \ref{resultfigure7} shows the effect of the number of transmitting antennas on the sensing performance for a given minimum communication SINR of 10 dB. The broader beamwidth associated with a lower number of transmit antennas increases the interference to the users demanding more power to achieve the required SINR value. This causes a poorer sensing performance than with more antennas at the transmitting end. 

Fig. \ref{resultfigure6} shows the convergence of Algorithm 1 in terms of the while-loop iterations required. The RCRB value decreases as the iteration progresses, reaching a steady value in 2 iterations for the considered simulation parameters. The choice of $I^{\mathrm{intra}}_{m,\mathrm{cbf}}$, $I^{\mathrm{inter}}_{\mathrm{cbf}}$, $I_{\mathrm{comp}}$ plays an important role in the convergence speed and the final RCRB value: a higher value requires more number of iterations, whereas, a lower value restricts the complete usage of the available power budget to decrease the interference caused. This would result in a high RCRB value, hence a high target parameter estimation error. 
\section{Conclusion} \label{conclusion}
In this work, we considered robust precoding techniques for coordinated beamforming (CBF) and coordinated multipoint (CoMP) multi-cell scenarios of a multi-user multi-input-single-output (MISO) integrated sensing and communication (ISAC) networks. \color{black}We derived the Cramer-Rao bound (CRB) expressions of the considered multi-cell scenarios, whose inverse gives the lower bound on the estimation error variance. Using the derived expressions, we formulated precoder design problems that minimize CRB and maximize the minimum communication signal-to-noise-plus-interference ratio (SINR) subject to a total power constraint\color{black}. The design considered both the block-level and symbol-level precoding techniques, and the non-convexity of the problems were tackled by a combination of semidefinite relaxation (SDR) and alternating optimization (AO) techniques. As the first work to consider robust precoding techniques for a multi-cell ISAC system, we initially investigated the effect of additional signal power received through inter-cell links. As such, we deduced that if not carefully considered, the inter-cell sensing and communication links reduce the sensing accuracy while decreasing the received SINR. Among the considered multi-cell scenarios, the CoMP performs well in terms of both sensing accuracy and communication SINR, compared to the CBF scenario, since it uses the leaked power from the neighboring \color{black}BSs constructively for bi-static sensing and for boosting the serving users' SINRs\color{black}. Additionally, in both the multi-cell scenarios, the SLP solution outperformed the BLP counterpart due to its ability to use the co-channel interference constructively.
\bibliographystyle{IEEEtran}
\bibliography{./bibliography.bib}

\begin{thebibliography}{10}
\providecommand{\url}[1]{#1}
\csname url@samestyle\endcsname
\providecommand{\newblock}{\relax}
\providecommand{\bibinfo}[2]{#2}
\providecommand{\BIBentrySTDinterwordspacing}{\spaceskip=0pt\relax}
\providecommand{\BIBentryALTinterwordstretchfactor}{4}
\providecommand{\BIBentryALTinterwordspacing}{\spaceskip=\fontdimen2\font plus
\BIBentryALTinterwordstretchfactor\fontdimen3\font minus \fontdimen4\font\relax}
\providecommand{\BIBforeignlanguage}[2]{{%
\expandafter\ifx\csname l@#1\endcsname\relax
\typeout{** WARNING: IEEEtran.bst: No hyphenation pattern has been}%
\typeout{** loaded for the language `#1'. Using the pattern for}%
\typeout{** the default language instead.}%
\else
\language=\csname l@#1\endcsname
\fi
#2}}
\providecommand{\BIBdecl}{\relax}
\BIBdecl

\bibitem{liu2020joint}
F.~Liu, C.~Masouros, A.~P. Petropulu, H.~Griffiths, and L.~Hanzo, ``{Joint Radar and Communication Design: Applications, State-of-the-Art, and the Road Ahead},'' \emph{IEEE Trans. on Commun.}, vol.~68, no.~6, pp. 3834--3862, 2020.

\bibitem{liu2022integrated}
F.~Liu, Y.~Cui, C.~Masouros, J.~Xu, T.~X. Han, Y.~C. Eldar, and S.~Buzzi, ``{Integrated sensing and communications: Toward dual-functional wireless networks for 6G and beyond},'' \emph{IEEE journal on selected areas in communications}, vol.~40, no.~6, pp. 1728--1767, 2022.

\bibitem{chen20235g}
W.~Chen, X.~Lin, J.~Lee, A.~Toskala, S.~Sun, C.~F. Chiasserini, and L.~Liu, ``5g-advanced toward 6g: Past, present, and future,'' \emph{IEEE Journal on Selected Areas in Communications}, vol.~41, no.~6, pp. 1592--1619, 2023.

\bibitem{union2022future}
I.~T. Union, ``{Future Technology Trends of Terrestrial International Mobile Telecommunications Systems Towards 2030 and Beyond},'' 2022.

\bibitem{kaushik2023towards}
A.~Kaushik, R.~Singh, S.~Dayarathna, R.~Senanayake, M.~Di~Renzo, M.~Dajer, H.~Ji, Y.~Kim, V.~Sciancalepore, A.~Zappone \emph{et~al.}, ``{Towards Integrated Sensing and Communications for 6G: A Standardization Perspective},'' \emph{arXiv preprint arXiv:2308.01227}, 2023.

\bibitem{9858656}
K.~Meng, Q.~Wu, S.~Ma, W.~Chen, K.~Wang, and J.~Li, ``{Throughput Maximization for UAV-Enabled Integrated Periodic Sensing and Communication},'' \emph{IEEE Trans. on Wireless Commun.}, vol.~22, no.~1, pp. 671--687, 2023.

\bibitem{liu2018toward}
F.~Liu, L.~Zhou, C.~Masouros, A.~Li, W.~Luo, and A.~Petropulu, ``{Toward Dual-Functional Radar-Communication Systems: Optimal Waveform Design},'' \emph{IEEE Trans. on Signal Processing}, vol.~66, no.~16, pp. 4264--4279, 2018.

\bibitem{5776640}
C.~Sturm and W.~Wiesbeck, ``{Waveform Design and Signal Processing Aspects for Fusion of Wireless Communications and Radar Sensing},'' \emph{Proceedings of the IEEE}, vol.~99, no.~7, pp. 1236--1259, 2011.

\bibitem{9723383}
D.~Ma, N.~Shlezinger, T.~Huang, Y.~Shavit, M.~Namer, Y.~Liu, and Y.~C. Eldar, ``A hardware prototype for joint radar-communication system using spatial modulation,'' in \emph{2021 55th Asilomar Conference on Signals, Systems, and Computers}, 2021, pp. 634--639.

\bibitem{temiz2023experimental}
M.~Temiz, C.~Horne, N.~J. Peters, M.~A. Ritchie, and C.~Masouros, ``{An Experimental Study of Radar-Centric Transmission for Integrated Sensing and Communications},'' \emph{IEEE Trans. on Microwave Theory and Techniques}, 2023.

\bibitem{liu2021cramer}
F.~Liu, Y.-F. Liu, A.~Li, C.~Masouros, and Y.~C. Eldar, ``{Cram{\'e}r-Rao Bound Optimization for Joint Radar-Communication Beamforming},'' \emph{IEEE Transactions on Signal Processing}, vol.~70, pp. 240--253, 2021.

\bibitem{9124713}
X.~Liu, T.~Huang, N.~Shlezinger, Y.~Liu, J.~Zhou, and Y.~C. Eldar, ``{Joint Transmit Beamforming for Multiuser MIMO Communications and MIMO Radar},'' \emph{IEEE Trans. on Signal Processing}, vol.~68, pp. 3929--3944, 2020.

\bibitem{9303435}
X.~Yuan, Z.~Feng, J.~A. Zhang, W.~Ni, R.~P. Liu, Z.~Wei, and C.~Xu, ``{Spatio-Temporal Power Optimization for MIMO Joint Communication and Radio Sensing Systems With Training Overhead},'' \emph{IEEE Transactions on Vehicular Technology}, vol.~70, no.~1, pp. 514--528, 2021.

\bibitem{7953658}
B.~Li and A.~P. Petropulu, ``{Joint Transmit Designs for Coexistence of MIMO Wireless Communications and Sparse Sensing Radars in Clutter},'' \emph{IEEE Trans. on Aerospace and Electronic Systems}, vol.~53, no.~6, pp. 2846--2864, 2017.

\bibitem{9385108}
M.~Temiz, E.~Alsusa, and M.~W. Baidas, ``{Optimized Precoders for Massive MIMO OFDM Dual Radar-Communication Systems},'' \emph{IEEE Trans. on Commun.}, vol.~69, no.~7, pp. 4781--4794, 2021.

\bibitem{xu2022proof}
T.~Xu, F.~Liu, C.~Masouros, and I.~Darwazeh, ``{Proof of Concept experiments of Joint Waveform Design for Integrated Sensing and Communications},'' in \emph{Proceedings of the 1st ACM MobiCom Workshop on Integrated Sensing and Communications Systems}, 2022, pp. 25--30.

\bibitem{ozkaptan2023software}
C.~D. Ozkaptan, H.~Zhu, E.~Ekici, and O.~Altintas, ``{Software-Defined MIMO OFDM Joint Radar-Communication Platform with Fully Digital mmWave Architecture},'' in \emph{2023 IEEE 3rd International Symposium on Joint Communications \& Sensing (JC\&S)}.\hskip 1em plus 0.5em minus 0.4em\relax IEEE, 2023, pp. 1--6.

\bibitem{liao2023faster}
Z.~Liao, F.~Liu, A.~Li, and C.~Masouros, ``{Faster-Than-Nyquist Symbol-Level Precoding for Wideband Integrated Sensing and Communications},'' \emph{arXiv preprint arXiv:2306.14509}, 2023.

\bibitem{9585492}
Z.~Zhang, Q.~Chang, F.~Liu, and S.~Yang, ``{Dual-Functional Radar-Communication Waveform Design: Interference Reduction Versus Exploitation},'' \emph{IEEE Commun. Lett.}, vol.~26, no.~1, pp. 148--152, 2022.

\bibitem{10065807}
M.~Wang and H.~Du, ``{Symbol-Level Precoding Design for Integrated Sensing and Communication},'' in \emph{2022 IEEE 8th International Conference on Computer and Communications (ICCC)}, 2022, pp. 967--971.

\bibitem{bjornson2013optimal}
E.~Bj{\"o}rnson, E.~Jorswieck \emph{et~al.}, ``Optimal resource allocation in coordinated multi-cell systems,'' \emph{Foundations and Trends{\textregistered} in Communications and Information Theory}, vol.~9, no. 2--3, pp. 113--381, 2013.

\bibitem{masouros2015exploiting}
C.~Masouros and G.~Zheng, ``{Exploiting Known Interference as Green Signal Power for Downlink Beamforming Optimization},'' \emph{IEEE Trans. on Signal processing}, vol.~63, no.~14, pp. 3628--3640, 2015.

\bibitem{li2020tutorial}
A.~Li, D.~Spano, J.~Krivochiza, S.~Domouchtsidis, C.~G. Tsinos, C.~Masouros, S.~Chatzinotas, Y.~Li, B.~Vucetic, and B.~Ottersten, ``{A Tutorial on Interference Exploitation via Symbol-Level Precoding: Overview, State-of-the-Art and Future Directions},'' \emph{IEEE Commun. Surveys \& Tutorials}, vol.~22, no.~2, pp. 796--839, 2020.

\bibitem{wei2023symbol}
Z.~Wei, R.~Xu, Z.~Feng, H.~Wu, N.~Zhang, W.~Jiang, and X.~Yang, ``Symbol-level integrated sensing and communication enabled multiple base stations cooperative sensing,'' \emph{IEEE Transactions on Vehicular Technology}, 2023.

\bibitem{wang2023resource}
X.~Wang, H.~Wu, Y.~Xu, H.~Cao, N.~Kumar, and J.~J. Rodrigues, ``{Resource Allocation in Multi-Cell Integrated Sensing and Communication Systems: A DRL Approach},'' in \emph{ICC 2023-IEEE International Conference on Communications}.\hskip 1em plus 0.5em minus 0.4em\relax IEEE, 2023, pp. 3210--3215.

\bibitem{li2022beamforming}
R.~Li, Z.~Xiao, and Y.~Zeng, ``{Beamforming Towards Seamless Sensing Coverage for Cellular Integrated Sensing and Communication},'' in \emph{2022 IEEE International Conference on Communications Workshops (ICC Workshops)}.\hskip 1em plus 0.5em minus 0.4em\relax IEEE, 2022, pp. 492--497.

\bibitem{xu2023fundamental}
Y.~Xu, L.~Xie, D.~Xu, and S.~Song, ``{Fundamental Limits and Base Station Selection for Collaborative Sensing in Perceptive Mobile Networks},'' in \emph{2023 IEEE International Mediterranean Conference on Communications and Networking (MeditCom)}.\hskip 1em plus 0.5em minus 0.4em\relax IEEE, 2023, pp. 97--102.

\bibitem{xu2022integrated}
D.~Xu, A.~Khalili, X.~Yu, D.~W.~K. Ng, and R.~Schober, ``{Integrated Sensing and Communication in Distributed Antenna Networks},'' \emph{arXiv preprint arXiv:2210.14880}, 2022.

\bibitem{xu2023integrated}
D.~Xu, C.~Liu, S.~Song, and D.~W.~K. Ng, ``{Integrated Sensing and Communication in Coordinated Cellular Networks},'' \emph{arXiv preprint arXiv:2305.01213}, 2023.

\bibitem{xu2023joint}
Y.~Xu, D.~Xu, L.~Xie, and S.~Song, ``{Joint BS Selection, User Association, and Beamforming Design for Network Integrated Sensing and Communication},'' \emph{arXiv preprint arXiv:2305.05265}, 2023.

\bibitem{jiang2023collaborative}
W.~Jiang, Z.~Wei, F.~Liu, Z.~Feng, and P.~Zhang, ``{Collaborative Precoding Design for Adjacent Integrated Sensing and Communication Base Stations},'' \emph{arXiv preprint arXiv:2310.08246}, 2023.

\bibitem{liu2023distributed}
X.~Liu, H.~Zhang, K.~Long, A.~Nallanathan, and V.~C. Leung, ``{Distributed Unsupervised Learning for Interference Management in Integrated Sensing and Communication Systems},'' \emph{IEEE Transa. on Wireless Commun.}, 2023.

\bibitem{zhang2023joint}
J.~Zhang, Z.~Fei, X.~Wang, P.~Liu, J.~Huang, and Z.~Zheng, ``{Joint Resource Allocation and User Association for Multi-Cell Integrated Sensing and Communication Systems},'' \emph{EURASIP Journal on Wireless Communications and Networking}, vol. 2023, no.~1, p.~64, 2023.

\bibitem{babu2023multi}
N.~Babu and C.~Masouros, ``{Multi-cell Coordinated Joint Sensing and Communications},'' in \emph{2023 Asilomar Conference}.\hskip 1em plus 0.5em minus 0.4em\relax IEEE, 2023.

\bibitem{wei2019multi}
Z.~Wei, C.~Masouros, K.-K. Wong, and X.~Kang, ``{Multi-cell interference exploitation: Enhancing the power efficiency in cell coordination},'' \emph{IEEE Transactions on Wireless Communications}, vol.~19, no.~1, pp. 547--562, 2019.

\bibitem{benzaghta2023designing}
M.~Benzaghta, G.~Geraci, D.~Lopez-Perez, and A.~Valcarce, ``{Designing Cellular Networks for UAV Corridors via Bayesian Optimization},'' \emph{arXiv preprint arXiv:2308.05052}, 2023.

\bibitem{li2007range}
J.~Li, L.~Xu, P.~Stoica, K.~W. Forsythe, and D.~W. Bliss, ``{Range Compression and Waveform Optimization for MIMO Radar: A Cram{\'e}r--Rao Bound Based Study},'' \emph{IEEE Trans. on Signal Processing}, vol.~56, no.~1, pp. 218--232, 2007.

\bibitem{grant2014cvx}
M.~Grant and S.~Boyd, ``{CVX: Matlab Software for Disciplined Convex Programming, version 2.1},'' 2014.

\bibitem{chrisSymbol}
K.~L. Law and C.~Masouros, ``{Symbol Error Rate Minimization Precoding for Interference Exploitation},'' \emph{IEEE Transactions on Communications}, vol.~66, no.~11, pp. 5718--5731, 2018.

\bibitem{bezdek2003convergence}
J.~C. Bezdek and R.~J. Hathaway, ``{Convergence of Alternating Optimization},'' \emph{Neural, Parallel \& Scientific Computations}, vol.~11, no.~4, pp. 351--368, 2003.

\bibitem{jiang2020faster}
H.~Jiang, T.~Kathuria, Y.~T. Lee, S.~Padmanabhan, and Z.~Song, ``{A Faster Interior Point Method for Semidefinite Programming},'' in \emph{2020 IEEE 61st annual symposium on foundations of computer science (FOCS)}.\hskip 1em plus 0.5em minus 0.4em\relax IEEE, 2020, pp. 910--918.

\bibitem{li2021symbol}
A.~Li, F.~Liu, X.~Liao, Y.~Shen, and C.~Masouros, ``Symbol-level precoding made practical for multi-level modulations via block-level rescaling,'' in \emph{2021 IEEE 22nd International Workshop on Signal Processing Advances in Wireless Communications (SPAWC)}.\hskip 1em plus 0.5em minus 0.4em\relax IEEE, 2021, pp. 71--75.

\end{thebibliography}

\end{document}